  \DeclareFontFamily{U}{txsymbols}{}
  \DeclareFontFamily{U}{txAMSb}{}
  \DeclareSymbolFont{txsymbols}{OMS}{txsy}{m}{n}
  \DeclareSymbolFont{txAMSb}{U}{txsyb}{m}{n}
  \DeclareMathSymbol{\aleph}{\mathord}{txsymbols}{64}
  \DeclareMathSymbol{\beth}{\mathord}{txAMSb}{105}
  \DeclareMathSymbol{\gimel}{\mathord}{txAMSb}{106}
  \DeclareMathSymbol{\daleth}{\mathord}{txAMSb}{107}
\def\authorone{Peter T. Breuer}
\def\E{{\cal{E}}}
\def\enc[#1]{\E[#1]}
\def\D{{\cal{D}}}
\def\dec[#1]{\D[#1]}
\def\QED{\raisebox{0.8ex}{\framebox{\kern0.2ex}}}
\def\C#1{{\mathbb C}[#1]}
\def\Ss#1{{\mathbb C}^L[#1]}
\def\iff{\mathop{\rm~iff~}}
\def\H{{\rm H}}
\DeclareFontShape{OT1}{cmtt}{bx}{n}
     {
      <5> <6> <7> <8> <9>
      <10> <10.95> <12> <14.4> <17.28> <20.74> <24.88> cmbtt10
      }{}
\DeclareFontShape{OT1}{cmtt}{b}{n}
  {<->sub * cmtt/bx/n}{}
\newtheorem{fact}{Fact}
\def\blfootnote{\xdef\@thefnmark{}\@footnotetext}
\DeclareRobustCommand*\cal{\@fontswitch\relax\mathcal}
\newenvironment{frameenv}[1][]
{%
 \begin{mybox}[#1]%
 %\begin{mdframed}[roundcorner=10pt,backgroundcolor=blue!10]
 \begin{framed}
 \begin{minipage}{0.99\columnwidth}
 %\caption{#1}%
 \begin{footnotesize}
}{%
 \end{footnotesize}\end{minipage}\end{framed}\vspace{-2ex}\end{mybox}%
}
\newcounter{myfootnote}
\newcommand{\myfootnote}[2]{%
\let\oldthefootnote=\thefootnote%
\renewcommand{\thefootnote}{#1}%
\footnote{#2}%
\addtocounter{myfootnote}{1}%
\let\thefootnote=\oldthefootnote%
}
\newcommand{\mypara}[1]{%
\par\noindent%
\refstepcounter{paragraph}%
\S\thesection.\arabic{paragraph}
\textbf{#1}\enspace\ignorespaces
}
\newcommand{\customlabel}[2]{%
\protected@write\@auxout{}{\string\newlabel{#1}{{#2}{\thepage}{}{section.\thesection}{}}}%
}
\def\subsection{\@startsection{subsection}{2}{\z@}%
                                     {-3.25ex\@plus -1ex \@minus -.2ex}%
                                     {1.5ex \@plus .2ex}%
                                     {\reset@font\normalsize\bfseries}}
\begin{document}

\date{}

\title{
Chaotic Compilation for Encrypted Computing:\\
Obfuscation but Not in Name
}
\titlerunning{Obfuscation but Not in Name}
\authorrunning{P.\,T.~Breuer}

\author{
\authorone
} % end author
\institute{
Hecusys LLC, GA, USA,
\email{ptb@hecusys.com}
}
\keywords{
Obfuscation \and Compilation \and Encrypted computing
}

\maketitle

\widowpenalty=0 % widows allowed (last line on next page)
\clubpenalty=0  % orphans allowed (first line at end of page)

\begin{abstract}
An `obfuscation' for encrypted computing is quantified
exactly here, leading to an argument that security against
polynomial-time attacks has been achieved for user data via the
deliberately `chaotic' compilation required for security properties
in that environment.  Encrypted computing is the emerging science
and technology of processors that take encrypted inputs to encrypted
outputs via encrypted intermediate values (at nearly conventional
speeds).  The aim is to make user data in general-purpose computing
secure against the operator and operating system as potential
adversaries.  A stumbling block has always been that memory addresses
are data and good encryption means the encrypted value varies
randomly, and that makes hitting any target in memory problematic without
address decryption, yet decryption anywhere on the memory path would
open up many easily exploitable vulnerabilities.  This paper `solves
(chaotic) compilation' for processors without address decryption,
covering all of ANSI C while satisfying the required security properties
and opening up the field for the standard software tool-chain
and infrastructure.  That produces the argument referred to above,
which may also hold without encryption.
\end{abstract}

\section{{Introduction}}
\label{s:Intro}

\noindent
This article explains recent advances in understanding and practice in
the emerging technology of {\em encrypted computing}
\cite{fletcher2012,BB13a,oic,heroic,BB16b,BB18a}, in particular
`chaotic compilation' for this context \cite{BB17a}, which 
supports the security proofs and now has broken through to allow essentially
all extant ANSI C \cite{ansi99} source codes to be compiled (the known lacks are
longjmp/setjmp), nearly as they stand (strict typing is necessary),
opening up the field. An unexpected by-product is an argument that
obfuscation safe against polynomial time attacks is produced by
this kind of compilation, perhaps even in the absence of encryption.

Encrypted computing means running on a processor that takes encrypted
inputs to encrypted outputs via encrypted intermediate values.
The operator and operating system are the potential adversaries
in this context. Encrypted computing  aims to:
\begin{quote}
\em
Protect user data from the powerful operator of the system.
\end{quote}
A subverted operating system is as much `the operator' as is a human
with administrative privileges, perhaps obtained by interfering with the
boot process, and this document will use `the operator' indiscriminately
for both.  From a systems perspective, the operator (or operating
system) are merely the operator mode of the processor.  A processor
starts in operator mode when it is switched on, in order to load
operating system code into reserved areas of memory from disk. 
Conventional application software relies on the processor to change from
user mode to operator mode and back again for system support (e.g.,
disk I/O) as required, so the operator mode of the processor presents
difficulties as adversary.  Security is possible only because `an honest
operator should not care what user data means' -- informally, the
operator/system is there to change the tapes when the machine beeps.

The reader should take onward with them from the outset that a processor
for encrypted computing works encrypted only in user mode.  In operator
mode it works unencrypted (with unrestricted access to registers and
memory) entirely as usual. A good mnemonic is:
\begin{quote}
\em
Privilege means no encryption,
no privilege means encryption.
\end{quote}
Any user data encountered while in operator mode (and also in user mode) will
be in encrypted form so the operator's privilege of unrestricted access
does not necessarily imply understanding of what user data means, or an ability to
alter it meaningfully. It can always be copied for later analysis, or
overwritten with zeros, one word substituted with another, etc, but
security proofs \cite{BB18c} show that cannot amount to more than
guesswork.

A fundamental technical motivation in this area is the intuition that a
processor that `works encrypted' must inherently be not less secure
than one that does not, and there are reasons why
it should not be (much) slower that have been borne out in hardware,
as described below.
The non-technical motivations are manifold and should not need enumeration.
One plausible scenario for
an attack by the operator is where cinematographic animation is rendered
in a server farm.  The computer operators have an opportunity to pirate
for profit parts of the movie before release and they may be tempted.
Another possible scenario is where a specialized facility processes
satellite photos of a foreign power's military installations to spot
changes.  If an operator (or hacked operating system) can modify the
data to show no change where there has been some, then that is an option
for espionage.  Informally:

\begin{definition}[Successful attack]
 %\begin{minipage}[b]{0.85\columnwidth}
\em A {\em successful attack}
by the operator is one that discovers the plaintext of user data or
alters it to order.
 %\end{minipage}
\end{definition}

\noindent
That is meant statistically, so an attack that succeeds
more often than chance counts.

Overwriting encrypted data with zeros does not count because that does
not set the plaintext under the encryption to a value that the adversary
can predict, even stochastically, supposing the encryption itself is
secure.  The operator can always act chaotically or destructively, by
rewriting memory randomly or turning the machine off, for example.

Pointers (and arrays) are the major obstacle to compiling real,
extant, source codes in the encrypted computing setting.  Pointers
underly any practical
programming language: in C (and C++) they are explicit, but in other
languages they are implicit.  In Fortran\,95 all variables are implicitly
pointers, so passing the variable to a subroutine allows the caller's
variable to be changed by the callee.  Passing an array allows entries
in the caller's array to be rewritten in most languages.  In Java every
object is a pointer, and a new object instead of another pointer to the
same object must be created via {\bf clone}().

`The problem with pointers' is that (1) the memory unit will receive an
encrypted address (`pointer') and (2) it is not privy to the encryption.
The rationale for (1) is that while the processor design could
accommodate address decryption on the memory path, that opens up an attack
vector in which the operator passes encrypted data as an address in a
load or store machine code instruction and checks (physically or
programmatically) where in memory it accesses, thus decrypting the
data, so it is not a good idea for security.  It is also faster for
hardware not to have encryption or decryption units on the memory path.
The rationale for (2) is an attacker with physical access could walk away with
the RAM chips and they might contain plaintext or keys if the memory
unit could do decryption itself (c.f., {\em cold boot} attacks
\cite{halderman2009lest,Gruhn2013}, physically freezing RAM to keep data
while the chips are moved), or if the processor hardware decrypts for
it. 

The conclusion to draw is that memory should do without decryption both
internally and externally.  But then it will receive some ciphertexts
that are each a different encryption of the same programmed address and
it cannot tell they are intended to mean the same thing: the programmer
says to write $x$ at address $123$ and the encrypted value 0x123456789A
of $123$ is passed to the memory as the ciphertext for the address at
runtime; if the programmer aspires to read it back via address $123$
then a ciphertext 0xA987654321 that is an alternative encryption of
$123$ may be passed as address, and content $y$ different to $x$ is
returned from there. One can get around that in hand-crafted fashion
by specifying exactly every time what encrypted address is to be used,
but that only works for simple programs -- it will break down for 
programs that use arrays of pointers to other arrays of pointers
to objects which may contain pointers.  In other words, programs will
break, often silently.  Strict typing can limit programs to those that
will work in the setting, but the programming experience would be
frustrating and porting existing programs to the platform would require
a prohibitively high degree of semantic understanding of the original.

We have now learned how to work around this hardware semantics
problem via software logic, while maintaining the security properties
required. That opens up encrypted computing for a
traditional software development cycle involving high-level source
language programming and a compile, assemble, link, load tool-chain with
operating system support.

This article is organized as follows.
Section~\ref{s:Ref} will set out bullet points about
encrypted computing platforms to refer to in the rest of the text. 
The reader should defer to those points.  For example, it is
not the case that the operator has access to the encryption key, as a
reader may suppose from experience with conventional platforms, and the
first point emphasizes that.  Nor is encryption done in software, as
the reader may have experienced, so it
does not take different times for different data.  Nor do instructions
aborted while still in the processor pipeline leave a mark in cache, nor
may they otherwise be detected, etc. There is nearly a decade of
research in the field, and there are no easily picked openings.
Section~\ref{s:Key} expresses abstract terms that
hopefully resonate with cryptologists, but ready-made
matches for this new field are scarce and the reader should first and
foremost take the concepts defined there on their own terms.
Section~\ref{s:Back} describes existing platforms for encrypted
computing and discusses the theory.
Section~\ref{s:FxA} resumes a modified Open\-RISC
(\url{http://openrisc.io}) machine code instruction set for encrypted
computing described in \cite{BB17a}, needed for security.
Section~\ref{s:Comp} first resumes the basis of chaotic 
compilation restricted to integers and call by value
a la \cite{BB17a} and then covers ramified atomic types
(long integers,  floats,  doubles, etc.),
arrays and pointers (hence call by reference), `struct' (record) and
union types.  Theory is developed in Section~\ref{s:Theory}. It
quantifies the entropy in a runtime trace for code compiled following
\eqref{e:maxH}, characterizing that as `best possible'.
Section~\ref{s:Discuss} discusses further implications for security,
including the argument that this setting has security against polynomial
time (in the word-length, $n$) attacks on user data, with or without
encryption.

\section{Reference Points}
\label{s:Ref}

\noindent
This section sets out touchstones on encrypted computing for
the reader to refer to.

\mypara{Encryption key access} is impossible programmatically. The key
is embedded in the processor hardware and there are no instructions
that read it.
Keys are either installed at manufacture, as with Smartcards
\cite{SmartCard}, or uploaded securely in public view via a
Diffie-Hellman hardware circuit \cite{buer2006cmos} to a write-only and
otherwise inaccessible internal store.
The privy user (which term does not include the operator) can always interpret
the encrypted data, elsewhere, for safety, as they know what the key is.

\mypara{Prototype processors}
for encrypted computing at near conventional speeds
already exist (see Section~\ref{s:Back}) and have existed since a first
recognizable try at the idea \cite{fletcher2012} about 2012.  Before that,
computing {\em systems} (not processors) that operated partially
encrypted were not uncommon for commercial movie and music reproduction
going back as far as 1993 \cite{hartman1993system} and 2000
\cite{hashimoto2001}, but their processor component operated in the
ordinary way while memory (RAM) or disk or other peripherals stored in
encrypted form the media to be safeguarded.  Modern Intel
SGX\texttrademark{} systems
\cite{Anati2013,hoekstra2013using,mckeen2013innovative} are not
conceptually different in that regard, but with extra sandbox technology
(memory enclaves into which the programmer may voluntarily consign
their code, with separate caches and register sets).  The idea in
encrypted computing is to rely instead on encryption through-and-through
as the security barrier protecting user data.  A basic question to
answer is whether encrypted computing potentially reduces the security
of the encryption (`no', per \cite{BB18c}).

\mypara{Platform/hardware issues} such as the real
randomness of random numbers or power side-channel information leaks
\cite{Zhang2012} will not be at issue here. Existing technological
defenses for conventional processors \cite{kissell2006method} may be
applied in practice, and are to be supposed.

\mypara{The intended mode of working} is `remote processing':
\begin{quote}
\em
The user compiles program and data,  sends both
away, and gets back output.
\end{quote}
\noindent
New data means a new encoding and compilation.  The word `encoding' is
used because more than encryption (and more than
vanilla compilation) is involved.  The compilation is `chaotic', as
explained below.  That involves producing both a new encoding for the
data throughout the program and a new machine code program that can cope
with it and then encoding and encrypting the data and
partially encrypting the machine code (only constants in
the instructions are encrypted - the `opcode' and other fields are
plaintext).

Continuous running may become feasible as technology
for dynamic code update \cite{Hicks2005} in this setting develops, but
for the purposes of this paper a mode of use other than the above is not
contemplated.

\mypara{Key management} is not an issue via this
argument: if (a) user B's key is still loaded when user A runs, then A's
programs do not run correctly as the encryption is wrong for them; if
(b) B's key is in the machine together with B's program when A runs,
then user A cannot supply appropriate encrypted inputs nor interpret the
encrypted output. 

\mypara{Security} user on user boils down in this setting to security
for the user against the operator as the most
powerful potential adversary in the system, and
it is proved in \cite{BB18c} (the argument is reprised in the Appendix;
Lemma~\ref{at:4}), that

\begin{enumerate}[noitemsep,label=(\roman*)]
\item a processor that supports encrypted
computing,
\item an appropriate machine code instruction set
architecture,
\item a `chaotic' compiler as described below
\end{enumerate}
together provide a property that parallels for this setting
classic {\em cryptographic semantic security} \cite{Goldwasser1982} (CSS) for
encryption, better known via the semantically equivalent {\em ciphertext
indistinguishability under chosen-plaintext attack} (IND-CPA).  The
latter means there is no (polynomial time) method for an adversary to
tell which is which between the ciphertexts of two plaintexts
(stipulated by the adversary, both the same length), to any degree
significantly above chance (as the key/block size tends to infinity).

\mypara{An appropriate machine code instruction set} as referred to
above is one that satisfies four axioms introduced in \cite{BB16b} and
set out in Box~\ref{b:2} here.  They are essentially (a) atomicity, (b)
encrypted working, (c) malleability, and (d) absence of collisions
between the ciphertext constants in program instructions and data at
runtime.

\mypara{Source code language coverage}
for encrypted computing that extended to (32-bit plaintext) integers,
arithmetic, conditionals and call-by-value was first achieved in
\cite{BB17a} and is explained in Section~\ref{s:Comp}.  It is
extended here to pointers and all of {\sc ansi} C \cite{ansi99} (except
setjmp/longjmp), with its long long, float, double, array, struct
(record) and union data types.  That `solves' the practical problem of
compilation for all encrypted computing.

\mypara{Primitive operations} in the processor will be taken to include all
encrypted 32-bit integer arithmetic.  That means for example that
encrypted data can be `added' by the hardware in
one operation, via an appropriate machine code instruction, producing an
encryption of $x+y$ (mod $2^{32}$) as result from encryptions of $x$ and
$y$ as operands.  Similarly for the other primitives of the computer
arithmetic, which are specified exactly in Section~\ref{s:FxA}.  
Careful specification is necessary because it turns out that the
standard arithmetic operations in their conventional form are dangerous
to security \cite{Rass2016}. 

Since hardware is not the focus of this paper, encrypted 64-bit integer
arithmetic will also be taken as primitive.  It is
carried out on two encrypted 32-bit integers representing the high and
low bits respectively (software subroutines for each operation 
using the 32-bit instruction set only is an alternative).
Encrypted (32-bit and 64-bit) floating point arithmetic will
only be treated in the Appendix but should also  be taken as
primitive.
These primitives are each supported by at least one of the prototype processors
discussed in Section~\ref{s:Back}.

\section*{{Notation}}
\label{s:Not}

\noindent
Encryption (with key $K$ understood) will be denoted $x^\E$ or $\E[x]$ of
plaintext value $x$ and should be read as a multi-valued function of $x$,
i.e., a single-valued function $\E[p\mathop{\cdotp} x]$ when the hidden
padding $p$ is taken into account.
Decryption is
$\zeta^\D=\D[\zeta]$, with $x=\D[x^\E]$.  The encryption aliases of a
ciphertext $\zeta_0$ are those $\zeta_1$ with the same plaintext as it
under decryption $\D$, i.e.,
$\zeta_0 \mathop{\mathop\equiv\limits_{\D}} \zeta_1$, meaning $\D[\zeta_0] =
\D[\zeta_1]$. To avoid excess notation that equivalence is written as
{\em equality} on the cipherspace, so equal aliases $\zeta_0 = \zeta_1
$, i.e., $\zeta_0 \mathop{\mathop\equiv\limits_{\D}} \zeta_1 $,
may be non-identical, with $\zeta_0\mathop{\ne}\limits_{\rm id}
\zeta_1$. They are encryption alternatives or {\em aliases} of the same
plaintext $x=\zeta_0^\D = \zeta_1^\D$.  Then $x^\E$, $\E[x]$ 
denote particular but unspecified aliases, and
$\zeta = \E[\D[\zeta]]$ is true (recall that equality on the cipherspace
is equivalence under decryption).  Key $K$ may be mentioned as
an extra parameter in $\E[K,x]$ and $\D[K,\zeta]$.  Padding
varies the value but not the $\mathop\equiv\limits_{\D}$ equivalence
class, which is the equality, with
$\E[p\mathop{\cdotp} x]=\E[q\mathop{\cdotp} x]\mathop{\ne}\limits_{\rm
id}\E[p\mathop{\cdotp} x]$ for $p\ne q$, and $\E[p\mathop{\cdotp}
x]\ne\E[p\mathop{\cdotp} y]$ for $\E[x]\ne\E[y]$.

The operation on the ciphertext domain corresponding to $o$ on the
plaintext domain will be written $[o]$, where
$x^\E\mathop{[o]}y^\E=\E[x\mathop{o} y]$. The relation on the
ciphertext domain corresponding to $R$ on the plaintext domain will be
written $[R]$, where $x^\E\mathop{[R]}y^\E \iff x\mathop{R} y$.
These are well-defined with respect to the cipherspace equality.

\section{Key Security Concepts}
\label{s:Key}

The game-theoretic formulation of the classic IND-CPA security
definition for this setting\,is:
\begin{enumerate}[noitemsep,label=G1.\arabic*]
\item the operator selects any program $p$ and input data $d$;
\item the user compiles it (twice) and shows the operator the
two plaintext codes $p_1$, $p_2$ and input data $d_1$, $d_2$;
\item the user encrypts those to 
$p_1'$, $p_2'$ and  $d_1'$, $d_2'$ and passes those to the
encrypted computing platform;
\item the operator examines the running
codes $p_1'(d_1')$, $p_2'(d_2')$,  interferes, experiments,
observes intermediate and final results $e_1'$, $e_2'$
(all in encrypted form);
\item the operator attempts to say which is which.
\end{enumerate}
The argument of Lemma~\ref{at:4} shows that in an encrypted computing
context there is no method (in particular no polynomial time one) the
operator can use to be right more often than if they were trying plain
IND-CPA against the encryption.  The probability of success given that
there is no advantage against the encryption alone is not different from
chance, i.e.  1/2 (exactly).

\begin{definition}
That the operator cannot win the game above with different than 1/2
probability (exactly, for all lengths of the encrypted word on the
platform) for this context, given that the encryption itself is
independently secure against IND-CPA, will be called {\em cryptographic
semantic security {\em relative} to the security of the encryption}
($\rho$CSS), for user data against operator mode as adversary.
\end{definition}

The rationale behind it is discussed below, but it is clear that it
holds for certain programs $p$: those that have no instructions and thus
do nothing to data, passing input to output directly, without any
interference.  Given that the encryption is secure, an adversary can do
nothing but guess blindly as to what the data is.

An interesting corollary (Remark~2 of \cite{BB18c}) has it that
there is an encrypted program that does decryption (it is the decryption
algorithm for the encryption in use in the processor, compiled for the
encrypted computing environment), but the operator cannot 
build it correctly, neither from scratch nor out of scavenged parts from
encrypted programs, with any probability of success above random chance.
That is so though they know what the encryption is (not the key)
and the canonical structure of a program for decryption.

The property $\rho$CSS is above all a statement about whether encrypted
computing
provides additional information to the operator that might lessen the
security of the encryption (`no').  For example, it might be possible
for the operator to recognize a computation $2+2=2*2$ (encrypted), via
coincidence of the encrypted operands and results.  The operator might
try inserting the same encrypted value for (supposedly) $2$ at all the
places in that calculation and see the encrypted answers are identical.
The argument of Lemma~\ref{at:4} says that no, even though the operator
thinks they have identified the situation, in fact other interpretations
of the operands and results that are consistent with the operational
semantics of the processor are possible, and all interpretations are
equally possible given the `right' compiler as generator.  That is a
perspective from mathematical logic and model theory, and  it may be
expressed for a general reader as follows: if one takes the axioms
like $x+y=y+x$ that describe the specially designed primitive operations
of the processor (Table~\ref{tb:1}), and chooses facts $r_1=2$, $r_2=4$
etc.\ consistent with the axioms for the unknown values $r_1$, \dots
$r_n$ at $n$ chosen points in the runtime trace of a program, then there
are also other, different but compatible choices such as $r_1=5$,
$r_2=6$ etc.\ that are equally plausible.  That holds for $n=1$ chosen
point, wherever it is (Corollary~2 of \cite{BB18c}), and in general
independently for any $n$ chosen points, with certain exceptions
characterized in this paper (for example, measuring the input $r_1$ and
output $r_2$ of a copy instruction does not allow $r_1$ and $r_2$ to be
chosen independently -- one of them may be chosen independently, and
then the other must be equal to it).  An informal rendering of $\rho$CSS
might be that good security in encrypted computing boils down to good
encryption.

A similar, but different, game is:
\begin{enumerate}[noitemsep,label=G2.\arabic*]
\item the operator selects any program $p$ and input data $d$ and
selects a polynomially defined point $r_n$ in the trace (e.g., the
last point at which register $r$ changes in the first $n^3$ steps);
\item the user compiles it for a
platform that does $n$-bit computing beneath the encryption,
forming $p_n$ and $d_n$, which are not shared with the operator;
\item the user encrypts to 
$p_n'$ and  $d_n'$ and passes those to the encrypted computing platform;
\item the operator examines the running
code $p_n'(d_n')$, interferes, experiments,
observes intermediate and final results (all in encrypted form);
\item the operator tries to say what the plaintext value beneath the
encryption in $r_n$ is.
\end{enumerate}
The argument in Section~\ref{s:Discuss} says that if the
encryption is secure for IND-CPA, then the operator cannot win this
game with a probabilty above chance, as $n\to\infty$ 
(but what the probability is for a given program
$p$ and word length $n$ is not going to be known).

If the operator could win this game, then they could win the first game
given at the front of this section (for sufficiently large $n$), so it
ought not to happen.  But the argument is independent and also suggests that
encryption may not be necessary.

Chaotic compilation for encrypted
computing is as follows. It is related to `obfuscation' in that
it makes code harder to read than it would otherwise be, but
`obfuscation' in the security
field refers to transforming one source code to another, while
compilation transforms source code to object code
(assembler, or machine code), not source code.
so a compiler cannot be a classical {\em obfuscator}. Instead,
chaotic compilation has properties required to prove $\rho$CSS.
It behaves stochastically:  applied twice to the same
source code, it emits two (probably) different object codes.
Chaotic compilation always at least:
\begin{enumerate}[label={\protect{(\Alph*)}},ref=\protect{\Alph*},noitemsep]
\item {\em produces identically structured machine codes;}
\label{p:1a}
\item {\em produces identically structured runtime traces;}
\label{p:1b}
\item {\em varies only encrypted constants in the code;}
\label{p:1c}
\item {\em varies runtime data beneath the encryption}
\label{p:1d}
\end{enumerate}
across different recompilations of the same source code.
The aim of (\ref{p:1a}-\ref{p:1d}) is that:
\begin{quote}
\em
The object codes and their traces look the same to
an ignorant observer.
\end{quote}
That is an observer ignorant of the encryption key and hence putatively
unable to read the plaintext of encrypted words.
The same instructions (modulo different embedded encrypted constants)
will run in the same order, with the same branches and
loops, but runtime data beneath the encryption differs following a
scheme known only to the user.  The math and computer science behind that
is reprised in Section~\ref{s:Back}. 

Chaotic compilation in particular means producing
code within the framework of (\ref{p:1a}-\ref{p:1d})
such that an adversary cannot count
on 0,\,1,\,2,\,etc.\ occurring frequently
beneath the encryption in a program trace,  which is naturally
the case in
a program written by a human. That would enable
statistically-based dictionary attacks \cite{katz1996}
against the encryption.  The desired property is:
\begin{definition}[Chaotic compilation]
No data value beneath the encryption
appears at runtime with higher probability than
another.
\label{p:obfusc}
\end{definition}
\noindent
That is measured across recompilations, which are generated 
stochastically. There is no dependence on the key/block size $n$ in
that definition (which also makes it different in kind from classical
definitions of obfuscation). The principle applies both for observations of
single words and also for simultaneous (`vector') observations at multiple
points in a trace, but that is much harder to achieve.
This paper proves it for single words and quantifies it for
vectors.  Note that the property is literally violated, e.g,  in
implementations \cite{DGHV10} of fully homomorphic encryptions
\cite{RAD78,Gentry09} (FHE), where the output of a 1-bit AND
(multiplication) operation is 0 beneath the encryption with 
probability $\tfrac{3}{4}$ (Box~\ref{b:1} (a)).\footnote{That 0 is a
probable outcome from multiplication in a FHE $\E$ is not an extra
liability because in 1-bit arithmetic $\E[x]+\E[x]=\E[0]$ with certainty
from any observed encrypted value $\E[x]$.  It can also be relied on
that $\E[1]$ is one of the inputs in any nontrivial calculation because
`all-zeros' as inputs propagates through to all-zeros as output via
$\E[0]+\E[0]=\E[0]*\E[0]=\E[0]$.  }

\begin{frameenv}[t]
\begin{flushleft}
\small
\refstepcounter{mybox}
\centerline{\rm Box \Roman{mybox}}
\medskip
\begin{minipage}[t]{0.42\columnwidth}
{\footnotesize
(a) A fully homomorphic encryption (FHE) $\E$ of 1-bit data
lacks the cryptographic semantic security (CSS) property.}
\customlabel{b:1}{\Roman{mybox}}
\begin{align*}
\E[0]*\E[0]&=\E[0]\\
\E[0]*\E[1]&=\E[0]\\
\E[1]*\E[0]&=\E[0]\\
\E[1]*\E[1]&=\E[1]
\end{align*}
\noindent
Guessing 0 as outcome is right 75\% of the time.
\end{minipage}
\hfill
\begin{minipage}[t]{0.42\columnwidth}
(b) A FHE program that adds 2-bit data to itself:

\begin{align*}
\E[0]+\E[0]&=\E[0]\\
\E[1]+\E[1]&=\E[2]\\
\E[2]+\E[2]&=\E[0]\\
\E[3]+\E[3]&=\E[2]
\end{align*}
has output $y=2x$ that is 100\% even, breaking CSS.
\end{minipage}
\end{flushleft}
\end{frameenv}

What makes it hard to achieve Defn.~\ref{p:obfusc} 
simultaneously for many different observed
values at different points in the trace
is that computational semantics has rules to it.
Apart from (\ref{p:1a}-\ref{p:1d}),
the `chaotic compiler' is obliged to generate machine code in
which:
\begin{enumerate}[label=\protect{(\Alph*)},ref=\protect{\Alph*},start=5,noitemsep]
\item {\em a copy instruction preserves data exactly;}
\label{p:5a}
\item {\em the variations introduced by the compiler are equal
      where any two control paths join.}
\label{p:5b}
\end{enumerate}
Any observer can deduce data beneath the encryption is copied in
\eqref{p:5a}, because the same ciphertext is seen.
The condition \eqref{p:5b} refers to the end of a loop, after
conditional blocks, at subroutine returns, at the label target of a {\bf
goto}.  That it holds is deducible by an adversary from
basic computational principles: since the number of times through a
loop is generally not predictable at compile-time, the compiler must ensure
the same conditions are re-established at the end of each loop as
prevail at the start, ready for another go-through.
\def\Obj{$\mathcal{S}$}

The constraints (\ref{p:5a}-\ref{p:5b}) impose an underlying order on what
is desired to be an apparently chaotic scheme of variations from nominal
in the plaintext data beneath the encryption in a program trace.
An overall {\em strategy} for achieving Defn.~\ref{p:obfusc})
is as follows \eqref{e:S}:
\begin{equation}
\parbox{0.85\columnwidth}%
{\em
Vary a machine code program's instruction constants so every possible data
variation beneath the en\-cryption is obtained with {\em uniform}
probability.
}%
\tag{\Obj}%
\label{e:S}%
\end{equation}

\def\maxH{$\mathcal{T}$}

\noindent
The probability is taken across traces, one for each
recompilation of the same source code. Unfortunately, modifications
require knowledge of the programmer's intention
as expressed in the source code, and reverse engineering that from
machine code is in general a known Turing Halting Problem equivalent
(i.e., computationally impossible).  It is up to the compiler to provide
variation, working stochastically from source code. However,
the `uniform \dots across the range' is a key  that indicates how the
compiler ought to generate code that boils down to a particular tactic
as explained below.

The compiler {\em tactic per increment of code} to implement the
strategy \eqref{e:S} is as follows:
\begin{equation}
\parbox{0.85\columnwidth}{
\em Every machine code instruction that writes should introduce maximal
entropy.}
\tag{\maxH}
\label{e:maxH}
\end{equation}
That refers to entropy introduced into the runtime trace by the
compiler's variations.

What this means is that
the compiler must exercise fully the possibilities for varying the trace at
every  opportunity in a compiled program.  The only way to vary the
trace 
is via an instruction that writes something, otherwise there is no
effect.
`Writes' means any change that is testable, even if not observed directly
--
the outcome of a comparison instruction cannot be read from 
registers, for example, but it is tested by where the branch instruction
jumps to, so it counts as a `write'.
The compiler should not, for example, always use 1 in an addition
instruction if the possibility exists of using a  different number and
still satisfying the programmer's intention.

\section{Background and Related Work}
\label{s:Back}

Several fast processors for encrypted computing are described in
\cite{BB18b}.  Those include the 32-bit KPU \cite{BB16b} with 128-bit
AES encryption \cite{DR2002}, which on the industry-standard Dhrystone
\cite{Weicker84} v2 benchmark is reported in \cite{BB18b} to
run encrypted at the speed of a 433\,MHz classic Pentium (tables
equating different processors are at
\url{www.roylongbottom.org.uk/dhrystone%20results.htm}) with a 1\,GHz
clock speed, and the older 16-bit HEROIC \cite{oic,heroic} with
2048-bit one-to-one Paillier encryption \cite{Pail99}, which runs like a
25\,KHz Pentium, as well as the recently announced 32-bit CryptoBlaze
\cite{cryptoblaze18} with one-to-many Paillier. That is 10$\times$
faster than HEROIC (but
branches are farmed out for decision to the remote user across the Internet and
are not counted in that figure).

\mypara{Ascend} \cite{fletcher2012} is retrospectively seen as
the first exemplar of the class of processors designed for
encrypted computing.  This 32 (in plaintext) bit AES-based (co)processor
aimed to work like a black box, literally: the processor was
physically inaccessible and could not be interfered with programmatically
until it had finished an execution task.
It accepted encrypted program and data inputs and
produced encrypted data outputs, and internals were unobservable.
To obfuscate addressing to memory, oblivious RAM
(ORAM) \cite{ostrovsky1990} was integrated.  Timing and power statistics
in terms of what signals appeared on the data pins at what time were arranged
to match specifications set beforehand.  The machine code instruction
set was MIPS RISC (i.e., uncomplicated -- RISC stands
for `reduced instruction set computing' and MIPS (an autonym) is the
variety taught in undergraduate courses) beneath the encryption.  The
operator was considered `semi-honest', in contrast to the
potential adversary of this
paper.  In common with the later designs, every instruction takes the same
time and power to execute no matter what the data.  As in reports on
later projects (except \cite{BB18b}) speeds are not discussed except
with respect to the same processor running without encryption, and a
12{-}13.5$\times$ slowdown is noted.  The authors' reticence is likely
because security audiences do not know that different instruction sets,
processor architectures, platform technologies and compilers are
incomparable, and would be nonplussed by a benchmark that equated to a
4\,MHz Pentium, but is really remarkable.  A 433\,MHz Pentium benchmarks
on Dhrystone at 114$\times$ one DEC VAX 11/780 minicomputer from 1977,
and 4\,MHz Pentium would `wax the VAX.'

\mypara{A modified arithmetic} is the principle behind
later processors for encrypted computing  and is known to be sufficient 
to generate encrypted working \cite{BB13a} since 2013.  HEROIC and
CryptoBlaze both embed the Paillier encryption, which is partially
homomorphic, making the modified addition straightforward 
to do in hardware,
with $x^\E \mathop{[{+}]} y^\E = \E[x+y] = x^\E * y^\E \mod m$. That is
multiplication of the encrypted numbers modulo a 2048-bit integer $m$.
HEROIC's 2048-bit ciphertext multiplication takes 4000
cycles of the supporting 200MHz hardware, so one encrypted (16-bit
plaintext) addition takes 1/50,000s, comparable to a 25\,KHz Pentium
(one 32-bit addition every 1/25,000s).

The KPU uses AES, not a homomorphic encryption (it is reported as having
been tried with Paillier, but the arithmetic proved impossible to
pipeline advantageously), and gets its speed by decrypting internally
once in the pipeline at the start of a sequential train of arithmetic
micro-operations, taking one cycle each, and re-encrypting at the end.
The overhead is reported as 10-20 cycles per train, integrated as
10 pipeline stages. That leads to approx.\
50-60\% pipeline occupation under pressure computing to/from
registers and cache only, and corresponding speed compared to a conventional
machine.

\mypara{Branches} are an obstacle to hardware for encrypted computing and
HEROIC implements a (signed) comparison relation $x^\E \mathop{[{\le}]}
y^\E $ via a lookup table for arithmetic sign (positive or
negative) for branch decisions. It acts as an extra
key.  The table reports whether $x^\E$ has $x$ positive or
negative. In order that the table be small enough, HEROIC's encryption is
one-to-one, not one-to-many.
The table still contains $2^{15}$ rows of 2048 bits (256
bytes) each, so it is 8\,MB.  The architecture is a
stack machine, not a von Neumann machine (the conventional architecture
for modern computers).  That has the advantage that local variables
within a software function are accessed by a (plaintext) number
in the machine code: the stack relative offset of
the variable.  In a conventional architecture, the location would be
accessed via a computed memory address, which would be a 2048-bit
encrypted number because the processor does its computation encrypted,
so $2^{2048}$ memory locations would be addressed and only a few of
those could physically exist.

Cryptoblaze passes branching decisions across the internet to
the user for resolution (after decryption). The KPU resolves
branch decisions internally in the pipeline, via the modified
arithmetic.

\mypara{Memory addressing} is done encrypted in
HEROIC -- it connects only
the last 22 bits of an encrypted address to memory address lines,
which proves sufficient in practice to disambiguate the only $2^{16}$ possible
different 2048-bit addresses.  That is 16MB of addressable memory,
consisting of $2^{16}$ locations each 256B (2048 bits)
wide. Those addresses are scattered randomly through 1GB of physical
RAM, but HEROIC's address translation unit (TLB -- translation
lookaside buffer) remaps them to a 
contiguous, physically backed, 16MB subspace. The KPU remaps 128-bit
encrypted addresses to a designated 32-bit region.

\mypara{The TLB} always has to be a special design.
Processors for encrypted computing have an addressing problem -- it is
impossible to provide physical backing for all the encrypted address
range. Those addresses that do occur in a program must each be
remapped individually to a backed region of memory.
But conventional TLB technology remaps addresses 8192 (a `page') at a
time, so prototypes have to innovate.

A common solution is unit granularity in the TLB, plus dynamic remapping.
Each encrypted address is mapped when it is encountered for the
first time to the next free address in physically backed memory.
Releasing defunct mappings in the TLB is one of the problems for these
processors.

\mypara{A special machine code instruction set} is needed.
HEROIC's, comprises just the one form: $x{\leftarrow}x{\mathop{[{-}]}}y$
conditionally followed by a jump to a point elsewhere in the
program if the result is not positive.  That is computationally complete
\cite{conway87fractran}, and the aim is not only to simplify the
hardware but to make all programs look alike.  Unfortunately HEROIC's
one-to-one encryption undoes that, while compilation for the unusual
instruction set is a challenge.  The newer CryptoBlaze processor adapts
a more conventional architecture and instruction set but the latter
likely does not have the security properties (Box~\ref{b:2}) that are
now seen as necessary in encrypted computing in order to resist
{\em chosen instruction attack} (CIA) \cite{Rass2016}, as explained in the
next paragraph.  The KPU adopts one feature of HEROIC's instruction set
(input and output may be shifted by arbitrary amounts by varying the
instruction constants) and modifies the OpenRISC
(\url{http://openrisc.io}) instruction set in line with that in order
to resist CIA, as explained below. 

The right machine code instruction set is pivotal
when the operator may be an adversary,
as conventional instructions are not secure.
The operator may, for example,  observe an
(encrypted) user datum $x^\E$ and put it through the machine's division
instruction to get $1^\E = x^E\mathop{[/]}x^\E$.  Then any desired
encrypted $y$ may be constructed by repeatedly applying the machine's
addition instruction for $y^\E = 1^\E [+] \dots [+] 1^\E$.
By using the instruction set's comparator
instructions (testing $\E[2^{31}]\mathop{[{\le}]}z^\E$,
$\E[2^{30}]\mathop{[{\le}]}z^\E$, \dots) on an
encrypted $z$ and subtracting on branch, $z$ may be obtained
efficiently bitwise. 
That is the chosen instruction attack (CIA) of \cite{Rass2016}.
If there is no division operator in the hardware then there will 
be a library routine for it (the attacker can write
one themself given an encrypted 1). Failing that, they can try every
encrypted value in a program and its trace and in practice one
of those will be $1^\E$ with probability greater than
$1/2^{32}$, and that gets a $1^\E$ at frequency better than blind
guessing.

The right instruction set resists such attacks.  The KPU's instruction
set contains HEROIC's as a subset and is proved to make those attacks
impossible (Theorem~\ref{e:star} below).

\mypara{The compiler}
must be involved too in order to
avoid {\em known plaintext attacks} (KPAs) \cite{Biryukov2011} based
either on the idea that not only do instructions like $x^\E\mathop{[{-}]}x^\E$
predictably favor one value over others (the result there is always
$x^\E\mathop{[{-}]}x^\E{=}\,0^\E$), but human programmers favor
values like 1.  The compiler must even out the statistics.

The compiler must do so even for object code consisting of a single
instruction.  That gives necessary conditions on 
instruction design and execution shown in
Box\,\ref{b:2} \cite{BB17a}. These constraints must be implemented
by the hardware:
instructions must (\ref{p:2a}) execute atomically, or recent attacks
such as Meltdown \cite{Lipp2018meltdown} and Spectre
\cite{Kocher2018spectre} against Intel become feasible (in those,
memory access instructions in a speculatively executed branch when
aborted leave behind a `halfway-done' taint  in the form of cache lines loaded),
must (\ref{p:2b}) work with encrypted values or an adversary could read them,
and (\ref{p:2c}) must be adjustable via embedded encrypted constants to
offset the values beneath the encryption by arbitrary deltas.  The
condition (\ref{p:2d}) is for the security proofs and amounts to
different padding or blinding factors for encrypted program constants
and runtime values.
\begin{frameenv}[tb]
\begin{flushleft}
{\small
\refstepcounter{mybox}
\rm Box \Roman{mybox}:
Machine code axioms. Instructions \dots 
}
\customlabel{b:2}{\Roman{mybox}}
\end{flushleft}
\begin{enumerate}[leftmargin=*,align=left,labelwidth=1em,labelsep=0.5em,itemindent=\parindent,label={(\alph*)},ref={\Roman{mybox}\protect{\alph*}},itemsep=0.25ex,widest=a,labelindent=0pt]
\item
{\em Are a black box from the perspective of the programming
interface, with no intermediate states}.
\label{p:2a}
\item
{\em Take encrypted inputs to encrypted outputs}.
\label{p:2b}
\item
{\em Are adjustable via (encrypted) embedded constants 
to produce any desired offset delta in the (decrypted, plaintext)
inputs and outputs at runtime}.\kern-5pt
\label{p:2c}
\item
{\em Can have no cipherspace collisions between 
encrypted instruction constants and runtime data.}
\label{p:2d}
\end{enumerate}
\end{frameenv}

In this document (\ref{p:2d}) will be further strengthened to:
\begin{enumerate}[label={},ref={\ref{p:2d}$^*$}]
\item
\begin{minipage}[b]{0.85\columnwidth}\em%
No collisions between constants in different instructions
or different positions.
\end{minipage}
\hfill{\hspace{-3em}\rm ({\ref{p:2d}$^*$})}
\label{p:2d*}
\end{enumerate}
`Different instruction' means different opcodes.
Padding beneath the encryption enforces that, and the
processor silently produces nonsense on violation.  The aim is to block
experiments with transplanted program constants.  With
(\ref{p:2d}), moving runtime encrypted data into
instructions or vice versa was already blocked.

\begin{remark}
HEROIC's $x\leftarrow x \mathop{[{-}]} y$ instruction fails
{\rm(\ref{p:2c})} because 
$x\mathop{[{+}]}C\leftarrow (x\mathop{[{+}]}A)\mathop{[{-}]}(y\mathop{[{+}]}B)$
as {\rm(\ref{p:2c})} requires cannot be achieved by varying the constants in
the instruction, as there are none. That is
$x\leftarrow x\mathop{[{-}]}y \mathop{[+]} K$
where $K= A \mathop{[{-}]}B\mathop{[{-}]}C $ so that 
would be `fixed' for {\rm(\ref{p:2c})} if the instruction included an
extra additive constant $K$. But the subsequent
test $x \mathop{[\le]} 0$ also needs to be fixed to $x \mathop{[\le]}
A$ for $A$ supplied in the instruction. With that,
HEROIC's instruction set would work for the argument and theorem
of {\rm\cite{BB18c}} (below). 
\label{r:1}
\end{remark}
\noindent
The effect of (\ref{p:2a}-\ref{p:2d}) is proved (Appendix, \cite{BB18c}) to be:
\begin{theorem}
A program and its runtime trace  may consistently be interpreted
arbitrarily in terms of data beneath the encryption at
any one point in memory or trace.\kern-2pt
\label{e:star}
\end{theorem}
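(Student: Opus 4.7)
I plan to exploit the gauge freedom granted by axiom \ref{p:2c}: every machine-code instruction has embedded encrypted constants that can be adjusted to offset its output beneath the encryption by any desired $\delta$, with correspondingly shifted inputs or outputs at the ciphertext level. Fix any single point $P$ in memory or in the runtime trace, with observed ciphertext $\zeta_P$ and nominal plaintext $v_0=\D[\zeta_P]$. For an arbitrary target plaintext $v$, set $\delta = v - v_0$ modulo $2^n$. The goal is to construct an alternative interpretation of the \emph{same} observed program and trace under which $\zeta_P$ is read as carrying $v$, while every operational-semantics constraint imposed by the processor's primitive operations remains satisfied.

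The construction localises $\delta$ to the instructions adjacent to $P$. Let $I_w$ be the instruction that writes $\zeta_P$ and let $\{I_r\}$ be the instructions that consume $\zeta_P$ before it is overwritten. By axiom \ref{p:2c}, the embedded constants of $I_w$ admit a reinterpretation under which its output beneath the encryption is shifted by $+\delta$, so the very same output ciphertext $\zeta_P$ now stands for $v_0 + \delta = v$. Symmetrically, each $I_r$ is re-gauged to inject a $-\delta$ compensation on its input from $P$, so that the value fed into downstream computation is identical to the value produced under the original interpretation. All other instructions retain their original interpretations, and so the plaintext values at every trace point other than $P$ are unchanged.

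Consistency is then almost immediate: every instruction's inputs and output still satisfy its operational semantics, because the only plaintext that changes is at $P$ itself and the shift is cancelled at each consumer. Axiom \ref{p:2a} (atomicity) prevents any intermediate state from exposing a fact that could contradict the re-interpretation; axiom \ref{p:2b} keeps all values encrypted so no plaintext comparison is ever forced; and the strengthened collision-freedom (\ref{p:2d*}) guarantees that the shifted constants of $I_w$ and the $\{I_r\}$ do not accidentally coincide with other instruction constants or runtime ciphertexts, so the new reading is a bona fide interpretation. The main obstacle is the control-flow bookkeeping: if $\zeta_P$ is consumed along several branches, inside a loop body, or across a procedure return, the $-\delta$ compensation must appear uniformly at every such consuming site. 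Axiom \ref{p:2c} lets each consumer be re-gauged independently, so the correction is purely local at each site; because only a \emph{single} point's interpretation is being perturbed, the $\delta$-perturbation is confined to $I_w$ and its immediate readers and does not propagate into the rest of the trace. Hence an admissible consistent interpretation exists for every $v$, which is the theorem.
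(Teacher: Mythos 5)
Your starting point -- the gauge freedom of axiom (\ref{p:2c}) -- is the right one and is the same engine the paper uses, but the claim that the perturbation can be \emph{confined} to $I_w$ and its immediate readers is false, and the proof breaks exactly where you wave at the ``control-flow bookkeeping.'' Three concrete failures. First, not every consumer has a constant to re-gauge: a {\bf mov} $r_0\leftarrow r_1$ carries no embedded constant, and a {\bf sw} stores the register's ciphertext verbatim (its constant only perturbs the address), so property (\ref{p:5a}) forces the copied/stored value to shift by the same $\delta$; the perturbation therefore propagates along the dataflow until it reaches an instruction that can absorb it, contradicting your assertion that ``the plaintext values at every trace point other than $P$ are unchanged.'' Second, $I_w$ is a \emph{static} instruction: changing its constant shifts every dynamic execution of it (every loop iteration), not just the one that produced $\zeta_P$. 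Third, and most seriously, if $I_w$ is the last writer of its location on one of several paths into a join (a ``trailer''), a downstream consumer is a single static instruction with a single constant; it cannot inject $-\delta$ when control arrived via $I_w$ and $0$ when it arrived via the other path, so your local surgery yields an inconsistent interpretation unless the corresponding writers on \emph{all} joining paths are re-gauged too -- which is precisely the non-local synchronization condition (\ref{p:5b}).

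The paper sidesteps all of this by not attempting locality. Its sketch proofs (Proposition~\ref{at:1}, Theorem~\ref{at:3}) shift \emph{every} value in the trace uniformly by the same $\delta$ (there $+7$), adjusting every arithmetic and branch constant so the whole program remains consistent and every branch goes the same way; Lemma~\ref{at:4} then refines this to a structural induction on the machine code that explicitly synchronizes deltas at joins (inserting ``add zero'' instructions on paths that do not write the location) and propagates compensations through all downstream receivers. The residual dependencies your locality claim ignores are exactly the ones the paper later isolates in Definition~\ref{d:8} (copies, and last writes before a join), which is why the stronger multi-point result (Theorem~\ref{t:1}) must exclude them. To repair your argument, replace ``confined to $I_w$ and its immediate readers'' with a propagation argument: push $\delta$ forward through constant-free instructions until it can be absorbed, and require the same $\delta$ on all trailer instructions of the affected location at every join it reaches.
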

\noindent
The technical argument shows that picking any point in the trace, so far
as the adversary not privy to the encryption can tell, the word beneath
the encryption may vary over a 32-bit range across recompilations, 
equiprobably.

`Chaotic' compilation always threatens the adversary that a delta
offset has been introduced into runtime data beneath the encryption by
varying the constants in the instruction before and after a point of
interest, because (\ref{p:2a}) and (\ref{p:2b}) prevent the adversary
knowing and (\ref{p:2c}) allows the variation 
(note (\ref{p:2a}) means `no side-channels').

\begin{theorem}[$\rho$CSS]
Relative cryptographic semantic security holds for any one word of data beneath the
encryption and an adversary not privy to the encryption.
\label{e:ddagger}
\end{theorem}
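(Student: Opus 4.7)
The plan is to establish $\rho$CSS by a reduction that turns any operator advantage in the game G1.1--G1.5 into an IND-CPA distinguisher against the underlying encryption $\E$. Since the statement is restricted to a single word beneath the encryption, Theorem~\ref{e:star} supplies the required single-point consistency, and the reduction is essentially a repackaging of its conclusion together with the compilation constraints (\ref{p:1a})--(\ref{p:1d}) and the machine-code axioms of Box~\ref{b:2}.

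First I would invoke the chaotic compilation properties (\ref{p:1a})--(\ref{p:1d}) to observe that the two compilations $p_1', p_2'$ produce object codes and traces that agree in instruction layout, control flow, and in the positions at which ciphertexts occur, differing only in which ciphertexts are embedded as constants or flow at runtime. Hence everything in the operator's view, other than the ciphertext at the designated word position, is identically distributed on the two sides of the game and carries no distinguishing information by itself.

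Second, I would apply Theorem~\ref{e:star} at the chosen single point of the trace: across recompilations the plaintext there ranges over every $n$-bit value equiprobably, so the observed ciphertext is distributionally an encryption of a uniformly random plaintext under $\E$, independently of the programmer's source. The two sides of the game therefore collapse to two IND-CPA-style challenges against $\E$, and the IND-CPA hypothesis on $\E$ forces the distinguishing advantage to zero (exactly, in the sense of the $\rho$ in $\rho$CSS: everything beyond a pure IND-CPA advantage is zero, not merely negligible).

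Third, I would formalize the reduction: given an operator $A$ winning G1.1--G1.5 with probability $\tfrac12 + \varepsilon$ beyond the IND-CPA advantage against $\E$, construct an IND-CPA adversary $B$ for $\E$ which, on receiving a challenge ciphertext $c$, simulates chaotic compilations of both $p_1, p_2$ using the malleability axiom (\ref{p:2c}) to calibrate neighbouring instruction constants so that $c$ appears at the targeted word position, feeds the simulation to $A$, and outputs $A$'s guess; faithfulness of the simulation follows from axioms (\ref{p:2a}), (\ref{p:2b}) and (\ref{p:2d}$^*$), which forbid side-channels and any cross-instruction or instruction-to-runtime ciphertext collision $A$ might exploit. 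The main obstacle is the operator's interference/experimentation phase G1.4, where $A$ may substitute ciphertexts and replay portions of the execution; here I would argue that because the theorem restricts attention to a single observed word, each such experiment either reduces again to a single-point query covered by Theorem~\ref{e:star}, or amounts to a plain IND-CPA query against $\E$, so no additional leakage is possible. The multi-point extension, which would require controlling joint distributions along the trace and is genuinely harder, is explicitly deferred to the entropy analysis of Section~\ref{s:Theory}.
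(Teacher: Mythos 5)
Your overall route differs from the paper's. The paper does not build an explicit IND-CPA reduction at all: its proof (Theorem~\ref{t:5} in the Appendix) is a short direct Bayes computation. It fixes a guessing method $f(C,T)$ for one bit $b$ of the observed word, notes that $f$'s output distribution is the same whether $b=0$ or $b=1$ because codes and traces differ only in ciphertexts the adversary cannot read (this is where ``relative to the security of the encryption'' enters --- as an independence assumption, not a reduction), invokes Lemma~\ref{at:4} to get $\mathrm{p}(b{=}1)=\mathrm{p}(b{=}0)=\tfrac{1}{2}$ across recompilations, and concludes the success probability is exactly $\tfrac{1}{2}p_{C,T}+\tfrac{1}{2}(1-p_{C,T})=\tfrac{1}{2}$. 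Your first two steps rest on the same underlying ingredient (Theorem~\ref{e:star} and Lemma~\ref{at:4} are both consequences of the same compile strategy), so up to that point you are in the spirit of the paper.

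The gap is in your third step, the simulation. An IND-CPA adversary $B$ holds only an encryption oracle and the challenge ciphertext $c$; it holds neither the key nor the processor. To give $A$ a faithful view of phase G1.4, $B$ must produce the entire runtime trace downstream of the targeted word, and those ciphertexts are functionally determined (via the processor's $[o]$ operations) by the plaintext under $c$ --- which is exactly what $B$ does not know. Malleability (\ref{p:2c}) does not rescue this: it describes how the \emph{processor} executes instructions carrying encrypted constants, not an operation $B$ can perform on ciphertexts by itself; and (\ref{p:2d}) actively forbids recombining instruction constants with runtime data. Nor can $B$ simply hand $A$ the real processor, since then it cannot arrange for $c$ to appear at the chosen position. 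So the reduction as stated does not go through without an evaluation oracle or a hybrid over trace positions, neither of which you supply. The paper's direct independence-plus-uniformity computation sidesteps the problem entirely, which is presumably why the result is phrased as security \emph{relative} to the encryption (conditioned on its effectiveness) rather than obtained by reduction to it.
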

\noindent
That is what is usually rendered as {\em encrypted computation does not
compromise encryption}, but it is really trivial.  If one imagines
the program that does nothing, consisting of no instructions, which
transmits input to output unchanged, all it says is that the input can
be any value (and the output will be the same any value).  There is no
reason or way for an adversary to discern any tendency beneath the
encryption to some proper subset of values.

But data words in a program of any size and form are
individually unconstrained by the adversary's observations (or
experiments, as a continuation of the argument deduces) according to
Theorem~\ref{e:star}.
Further, intuitively the adversary can select any two points in the
program, except a pair as remarked, and they
can be varied independently via changes in the surrounding
instructions that the adversary cannot perceive because
the instruction constants are encrypted, and this paper will quantify
that intuition.

A `chaotic' compiler backs the threat by really
varying runtime data beneath the encryption independently and
arbitrarily across recompilations to the extent the laws of
programming allow, as (\ref{e:S}) ideates.  How the compiler organizes
that is encapsulated in Box~\ref{tb:how}: a new {\em obfuscation scheme}
is generated at each recompilation.
\begin{definition}[Obfuscation scheme]
An {\em obfuscation scheme} is a plan
that specifies {\em a delta from nominal for the data beneath the
encryption in every memory and register location per point in the
program control graph, before and after every instruction}.
\label{d:3}
\end{definition}
\noindent
A high-level, declarative, description of how a compiler works in this
setting is that 
the compiler $\C{-}$ translates, for example, a source code expression
$e$ of type Expr, the value of which is to end up in register $r$ at runtime,
to machine code $\it mc$ of type MC,
and also generates a 32-bit offset $\Delta e$ of (integer) type Off
for $r$ at that point in the
program:
\begin{align}
\C{-}^r &:: {\rm Expr} \to {\rm MC}\times {\rm Off}\notag\\
\C{e}^r &= ({\it mc},\Delta e)
\label{e:1}
\\
\intertext{
Let $s(r)$ be the content of register $r$ in state $s$ of the processor
at runtime. The machine code {\em mc}'s action  is to
change state $s_0$ to an $s_1$ with a ciphertext in $r$ whose plaintext
value differs by $\Delta e$ from the nominal value $s_0(e)$ (the arrow
symbol means `steps eventually to'):
}
s_0 \mathop\rightsquigarrow\limits^{\it mc}  s_1 &~\text{where}~ s_1(r) = 
\E[s_0(e) + \Delta e]
\label{e:2}
\end{align}

\begin{remark}
Bitwise exclusive-or or the binary operation of another
mathematical group are alternatives for `$+$'.
\label{r:group}
\end{remark}

\noindent
For comparison, an `ordinary', non-chaotic, compiler and ordinary
execution platform would instead have the following abstract description:
\begin{align*}
\C{e}^r &= {\it mc}\\
s_0 \mathop\rightsquigarrow\limits^{\it mc}  s_1 &~\text{where}~ s_1(r) = s_0(e)
\end{align*}

\noindent
The `nominal value' $s_0(e)$ is formalized via a canonical
construction. For the encrypted computing context map
variable $x$ to its register $r_x$ (the
runtime value is offset by a delta $\Delta x$), checking the
(ciphertext)  content of $r_x$ in the state and discounting the delta
from the plaintext value to get
$s_0(x) =\D[s_0(r_x)]\mathop{-} \Delta x$.  Arithmetic 
in the expression is formalized recursively, with
$s_0(e_1+e_2)=s_0(e_1)\mathop{+}s_0(e_2)$, etc. In the `ordinary' context,
not encrypted  computing, the nominal value of the variable is instead
$s_0(x) =s_0(r_x)$ with no offset from the value in the
register at runtime, and no encryption in the latter.

The encryption $\E$  is shared with the user and the processor but
not the potential adversaries, the operator and operating system.  The
obfuscation scheme is known to the user, but not the processor and not
the operator and operating system.  The user compiles the program
according to the scheme and sends it to the remote processor with the
encrypted data to execute it on and needs to and does know the offsets
at least on inputs and outputs.  That allows the right data to be
created and sent off for processing and allows sense to be made by the
user of output received, once they have decrypted it.

\begin{frameenv}[t]
\begin{flushleft}
\small
\refstepcounter{mybox}
\rm Box \Roman{mybox}:
\footnotesize
What the compiler does, in sequence:
\customlabel{tb:how}{\Roman{mybox}}
\end{flushleft}
\noindent
\begin{enumerate}[leftmargin=*,align=left,labelwidth=2em,itemindent=\parindent,label={\roman*.},ref={\arabic{mybox}.\protect{\roman*}},noitemsep]
\item
{\sl Generate an obfuscation scheme
of planned data offsets from nominal beneath the encryption. }
\item
{\sl Vary instruction constants to implement (i),
thereby leaving runtime traces unchanged in form, but not content.}
\item
{\sl Equiprobably\,generate\,all\,variations\,(ii),\,hence\,schemes\,(i).}
\end{enumerate}
\end{frameenv}

\section{{Instruction Set}}
\label{s:FxA}

\noindent
As noted in Section~\ref{s:Back}, conventional instruction sets are
not safe against chosen instruction attacks (CIAs) in an encrypted
computing setting.  Without knowing any encrypted constants, it is still
possible to program calculations that give a known constant as answer, such as
$x^\E\mathop{[{-}]}x^\E$, or are biased stochastically towards a known
subset.  But instruction sets satisfying (\ref{p:2a}-\ref{p:2d}) do not
have that problem, by Theorem~\ref{e:star}, so what is needed is a
practical instruction set architecture (ISA) conforming to
(\ref{p:2a}-\ref{p:2d}).  HEROIC's `one instruction' instruction set can
be modified to conform by the incorporation of a couple of encrypted
constants in each instruction, as remarked in Remark~\ref{r:1},
but it is untried and
impractical as a compilation target.

A `fused anything and add' general purpose ISA suitable for encrypted
computing and satisfying conditions (\ref{p:2a}-\ref{p:2d}) is put
forward in \cite{BB17a} as a modification to OpenRISC v1.1
\url{http://openrisc.io/or1k.html}.  A subset is shown in
Table~\ref{tb:1} and in all there are about 200 instructions,
comprising single
and double precision integer and floating point and vector subsets,
uniformly 32 bits long.
Instructions reference up to three of 32 general purpose registers
(GPRs).  There are just two instructions (load/store: {\bf lw}/{\bf sw})
for memory access.  The instruction opcode is in the clear so the
decoding unit in the processor pipeline can act on it, but that allows
an adversary to see what kind of instruction it is, distinguishing addition from
multiplication, etc.  HEROIC's instruction set can
be mimicked by emitting only instruction pairs {\bf
add}\,r$_0$\,r$_0$\,r$_1$\,$k^\E$; {\bf blei}\,i\,r$_0$\,$a^\E$ (the
latter is the one-operand, one-constant form of the {\bf ble} instruction).

\begin{table}[tbp]
\caption{Integer subset of a machine code instruction set for
encrypted working.}
\label{tb:1}
\smallskip
{\centering
\scriptsize
\begin{tabular}[b]{@{}l@{\,}c@{\kern1pt}c@{\kern1pt}l@{\kern1pt}l@{~}l@{~}}
\em op.&
    \multicolumn{4}{@{}l}{\em fields} 
   %&{\em kind}
     & \multicolumn{1}{@{}l}{\em \hbox to 0.39in {mnem.\hfill} semantics \hfill}\\
\hline\\[-1ex]
add &$\rm r_0$
  &$\rm r_1$
    &$\rm r_2$
      &\kern0pt$k^\E$
   %&(aa)
        &\hbox to 0.39in {add} $r_0{\leftarrow}
           r_1\kern0.5pt\mathop{[+]} r_2\mathop{[+]} k^\E$\\
addi &$\rm r_0$
  &$\rm r_1$
      &\kern0pt$k^\E$
        &
   %&(aa)
        &\hbox to 0.39in {add\:imm.} $r_0{\leftarrow}
           r_1\kern0.5pt\mathop{[+]} k^\E$\\
sub &$\rm r_0$
  &$\rm r_1$
    &$\rm r_2$
      &\kern0pt$k^\E$
   %&(aa)
        &\hbox to 0.39in {subtract} $r_0{\leftarrow}
           r_1\kern0.5pt\mathop{[-]} r_2\mathop{[+]} k^\E$\\
mul &$\rm r_0$
  &$\rm r_1$
    &$\rm r_2$
      &\kern0pt$k^\E_0\kern0pt k^\E_1\kern0pt k^\E_2$
   %&(aa)
       &\hbox to 0.39in {multiply}
       $r_0{\leftarrow}(r_1\kern0pt\mathop{[-]}k^\E_1\kern0pt)\mathop{[\,*\,]}(r_2\kern0pt\mathop{[-]}k^\E_2\kern0pt)\mathop{[+]}k^\E_0$\\[0.5ex]
div &$\rm r_0$
  &$\rm r_1$
    &$\rm r_2$
      &\kern0pt$k^\E_0\kern0pt k^\E_1\kern0pt k^\E_2$
   %&(aa)
       &\hbox to 0.39in {divide} 
       $r_0{\leftarrow}(r_1\kern0pt\mathop{[-]}k^\E_1\kern0pt)\mathop{[\div]}(r_2\kern0pt\mathop{[-]}k^\E_2\kern0pt)\mathop{[+]}k^\E_0$\\
\dots\\
mov&$\rm r_0$&$\rm r_1$& &   
   %&(a)
     &\hbox to 0.39in {move}
       $r_0{\leftarrow} r_1$\\
beq& $i$&$\rm r_1$&$\rm r_2$&$k^\E$  
   %&(b)
     &\hbox to 0.39in {branch}
       ${\rm if}\,$b$\,{\rm then}\,{\it pc}{\leftarrow}{\it pc}{+}i$,
       $b\Leftrightarrow r_1 \mathop{[=]} r_2 \mathop{[+]} k^\E$\\
bne& $i$&$\rm r_1$&$\rm r_2$&$k^\E$  
   %&(b)
     &\hbox to 0.39in {branch}
       ${\rm if}\,$b$\,{\rm then}\,{\it pc}{\leftarrow}{\it pc}{+}i$,
       $b\Leftrightarrow r_1 \mathop{[\ne]} r_2 \mathop{[+]} k^\E$\\
blt& $i$&$\rm r_1$&$\rm r_2$&$k^\E$   
   %&(b)
     &\hbox to 0.39in {branch}
       ${\rm if}\,$b$\,{\rm then}\,{\it pc}{\leftarrow}{\it pc}{+}i$,
       $b\Leftrightarrow r_1 \mathop{[<]} r_2 \mathop{[+]} k^\E$\\
bgt& $i$&$\rm r_1$&$\rm r_2$&$k^\E$   
   %&(b)
     &\hbox to 0.39in {branch}
       ${\rm if}\,$b$\,{\rm then}\,{\it pc}{\leftarrow}{\it pc}{+}i$,
       $b\Leftrightarrow r_1 \mathop{[>]} r_2 \mathop{[+]} k^\E$\\
ble& $i$&$\rm r_1$&$\rm r_2$&$k^\E$   
   %&(b)
     &\hbox to 0.39in {branch}
       ${\rm if}\,$b$\,{\rm then}\,{\it pc}{\leftarrow}{\it pc}{+}i$,
       $b\Leftrightarrow r_1 \mathop{[\le]} r_2 \mathop{[+]} k^\E$\\
bge& $i$&$\rm r_1$&$\rm r_2$&$k^\E$   
   %&(b)
     &\hbox to 0.39in {branch}
       ${\rm if}\,$b$\,{\rm then}\,{\it pc}{\leftarrow}{\it pc}{+}i$,
       $b\Leftrightarrow r_1 \mathop{[\ge]} r_2 \mathop{[+]} k^\E$\\
\dots \\
b   &$i$&   &   &   
   %&(c)
     &\hbox to 0.39in {branch}
       ${\it pc}\leftarrow {\it pc}+i$\\
sw   &\multicolumn{4}{@{}l@{}}{$(k_0^\E){\rm r_0}~{\rm r_1}$}
     &\hbox to 0.39in {store}
       $\mbox{\rm mem}\llbracket r_0\mathop{[+]}k_0^\E\rrbracket \leftarrow r_1$ \\
lw   &\multicolumn{4}{@{}l@{}}{${\rm r_0}~(k_1^\E){\rm r_1}$}
     &\hbox to 0.39in {load}
       $r_0 \leftarrow \mbox{\rm mem}\llbracket r_1\mathop{[+]}k_1^\E\rrbracket
              $\\
jr   &$\rm r$&  &    &
     &\hbox to 0.39in {jump}
       ${\it pc} \leftarrow r$\\
jal  &$j$&      &    &
     &\hbox to 0.39in {jump}
       ${\it ra} \leftarrow {\it pc}+4;~{\it pc} \leftarrow j$\\
j   &$j$&      &    &
     &\hbox to 0.39in {jump}
       ${\it pc} \leftarrow j$\\[2ex]
\end{tabular}
\qquad
\begin{minipage}[b]{0.3\textwidth}
{\scriptsize\sc Legend}\\[0.5ex]
\scriptsize
\noindent
\begin{tabular}[b]{@{}l@{~}c@{~}l@{~}}
r &--&register indices\\
$k$&--&32-bit integers\\
pc &--&prog.\ count\ reg.\\
$j$&--&program count\\
`$\leftarrow$'&--&assignment\\
ra &--&return addr.\ reg.\\
$\enc[\,~\,]$&--&encryption\\
$i$&--&pc increment\\
$r$ &--& register content \kern-5pt \\
$k^\E$&--&encrypted value $\enc[k]$\\
\multicolumn{3}{@{}l@{}}{$x^\E\mathop{[o]}y^\E =
      \E[x\mathop{o}y]$}\kern-20pt\\
\multicolumn{3}{@{}l@{}}{$x^\E\mathop{[R]}y^\E \Leftrightarrow
     x\mathop{R}y$}\kern-20pt
\end{tabular}
\end{minipage}
}% centering
\end{table}

To make this information concrete for the reader, a runtime trace for
the Ackermann
function\footnote{Ackermann C code: {\bf int} A({\bf int} m,{\bf int} n)
\{ {\bf if} (m == 0) {\bf return} n+1; {\bf if} (n == 0) {\bf return}
A(m-1, 1); {\bf return} A(m-1, A(m, n-1)); \}.} \cite{Sundblad71}
compiled for this instruction set is shown in Table~\ref{tab:3}.  The
machine code is shown disassembled at left, register updates at right.
Encrypted constants are shown with plaintext exposed and padding 
hidden.  The constants in the instructions have randomized plaintexts,
not 0s, 1s, 2s, etc.\ as would be expected.  That goes for the updates
too, except that for readability the delta in the obfuscation scheme for
the return value in register {\bf v0} is set to zero, and the
(encrypted) `13' result can be seen.  Ackermann's is the most
computationally complex function possible, stepping up in complexity for
each increment of the first argument, so getting the answer right is a
confirmation of the correctness of the `chaotic' compilation
technique.  It is short, but the code tests conditionals, assignments,
arithmetic, comparators, call and return.

The 32-bit word-sized instructions may need to embed 128-bit or longer
encrypted constants, so `prefix' words are added as needed, carrying 29
extra bits of data each.

\subsection{Instruction Diddling}

\noindent
Condition (\ref{p:2b}) of Box~\ref{b:2} requires one more constant
in each branch instruction, an encrypted bit $k_0$
that decides if the 1-bit result of the test should be inverted.  The
test is observable by whether the branch is taken or not, so by
(\ref{p:2c}) it should be modifiable by the compiler via an encrypted
instruction constant.  The extra bit changes equals to not-equals and
vice versa, a less-than into a greater-than-or-equal-to, and so on.  The
bit {\em diddle}s the instruction.  In practice, the bit is composed
from the padding bits in the other constants in the instruction, so it
is not explicit in Table~\ref{tb:1}, where the branch semantics shown
are {\em post-diddle}, but the compiler knows what it is.

There is an argument that whether the first program code block after the
branch instruction is the test `fail' or `succeed' case is already
hidden by the general method of `chaotic' compilation applied to boolean
expressions.  That argument is pursued below.

\subsection{The Contestable Equals}
\label{ss:Debate}

\noindent
Diddling works well to disguise less-than instructions and other order
inequalities, but not well for equals versus not-equals. What the
instruction is, equals or not-equals, may be tested by what
proportion of operands cause a jump at runtime.  If almost all do then
that is a not-equals.  If few do then that is an equals.  Trying the
same operand both sides is almost guaranteed to cause equals to fail
because of the embedded constants $k_1$, $k_2$ in \eqref{e:dagger=f}, so
if it succeeds instead, that (seeming!) equality test is (likely) diddled
to not-equals.

So if the test succeeds or not at runtime is detectable in practice
for an equals/not-equals branch instruction, contradicting (\ref{p:2b}).  To
beat that, a compiler must randomly change the way it interprets the
original boolean source code expression at every level so it cannot be
told if the source code, not the object code, had an equality or an
not-equals test.  It internally `tosses a coin' as it works upwards
through a boolean expression for if the source code at that point is to
be interpreted by a truthteller or a liar.  It equiprobably generates,
at each level in the boolean expression, liar code and uses the
branch-if-not-equal machine code instruction for an equality test in the
source code, or truthteller code and uses the branch-if-equal
instruction.  The technique is a generalization of Yao's {\em garbled
circuits} \cite{yao86}, but the compiler works with deeply structured
and recursive logic as well as finite, flat, boolean normal forms of
hardware logic gates.

\def\ME[#1]{\(\hbox{\tt#1}\sp{\E}\)}
\begin{table}[!tbp]
\caption{Trace for Ackermann(3,1), result 13.}
\label{tab:3}
\smallskip
\centering
\begin{minipage}[b]{0.6\textwidth}
\scriptsize
\begin{alltt}
{\rm{PC}}  {\rm{instruction}}                           {\rm{update trace}}
\dots
35  addi t0 a0      \ME[-86921031]      t0 = \ME[-86921028]
36  sub  t1 t1  t1  \ME[-327157853]     t1 = \ME[-327157853]
37  beq  t0 t1  2   \ME[240236822]                  
38  sub  t0 t0  t0  \ME[-1242455113]    t0 = \ME[-1242455113]
39  b 1                                             
41  sub  t1 t1  t1  \ME[-1902505258]    t1 = \ME[-1902505258]
42  xor  t0 t0  t1  \ME[-1734761313] \ME[1242455113] \ME[1902505258]
                                    t0 = \ME[-17347613130]
43  beq  9  t0  t1  \ME[167743945]                 
53  addi sp sp      \ME[800875856]      sp = \ME[1687471183] 
54  addi t0 a1      \ME[-915514235]     t0 = \ME[-915514234] 
55  sub  t1 t1  t1  \ME[-1175411995]    t1 = \ME[-1175411995]
56  beq  2  t0  t1  \ME[259897760]                   
57  sub  t0 t0  t0  \ME[11161509]       t0 = \ME[11161509]   
\dots
143 addi v0 t0      \ME[42611675]      \:\fbox{v0 = \ME[13]}
\dots
147 jr   ra                          \# {\rm{return \ME[13] in{\tt v0}}}
\end{alltt}
\end{minipage}
\qquad
\begin{minipage}[b]{0.3\textwidth}
{\scriptsize\sc Legend}\\[0.5ex]
\scriptsize
\noindent
\begin{tabular}[b]{@{}l@{~}c@{~}l@{~}}
\multicolumn{3}{@{}l@{}}{(registers)}\\
a0 &--& function argument\\
sp &--& stack pointer\\
t0, t1 &--& temporaries\\
v0 &--& return value (framed).
\end{tabular}
\end{minipage}
\end{table}

With that strategy, if the equals branch instruction jumps or not at
runtime does not relate statistically to what the source code says.
Condition (\ref{p:2c}) of Box~\ref{b:2} on the output of the
instruction is effectively vacuous with respect to the source code, as
an observer who sees a jump take place does not know if that is the
result of a truthteller's interpretation of an equals test in the source
code and it has come out true at runtime, or it is the result of the
liar's interpretation and it has come out false. Still, it
might be preferable to remove equals/not-equals.

For other comparison tests, just as many operand pairs cause a branch
one way as the other,\footnote{In 2s complement arithmetic $x<y$ is the
same as $x-y=z$ and $z<0$ and exactly half of the range satisfies
$z<0$, half satisfies $z\le0$.} which makes it indistinguishable
as to whether the opcode is diddled or not.  An equality test cannot be
recreated by an adversary as $x{\le}y$ and $y{\le}x$ because only
$x{\le}y{+}k$ is available, for an unknown constant $k$.  Reversed
operands is allowed by (\ref{p:2d*}) but produces $y{\le}x{+}k$, not
$y{+}k{\le}x$.  An estimate for $k$ might ensue from the proportion of
pairs $(x,y)$ that satisfy the conjunction of the inequality and its
reverse, and in particular $k{<}0$ would be signaled
by the complete absence of pairs that simultaneously satisfy both.
But diddling means the conjunctions might be $x{>}y{+}k$
and $y{>}x+{k}$ instead, and those have no solutions when ${-}k{-}1$ is
negative, not when $k$ is negative.  So equiprobably $k{<}0$ or
$k{\ge}0$, which gives nothing away.

\begin{remark}
A boolean `liar' adds a delta equal to 1 mod 2 to 1-bit data
beneath the encryption, a `truthteller' adds 0 mod 2.
\end{remark}

\vspace{-1.5ex}

\section{{Chaotic Compilation}}
\label{s:Comp}

\noindent
This section will describe in more detail but still declaratively and
abstractly what a chaotic compiler for encrypted computing does,
hopefully pointing out for a security audience just what is difficult
and what is easy about it.

The point of note is that the compiler works with a `deltas' database $D :
{\rm Loc}\,{\to}\,{\rm Off}$ containing an integer offset $\Delta l$ of
integer type Off for data in register or memory location $l$ (type Loc).
The offset is the delta by which the runtime data plaintext beneath the
encryption in the location is to vary from nominal at runtime,
following the description in
(\ref{e:1},\ref{e:2}), and the database $D$ is the incarnation of the
{\em obfuscation scheme} of Defn.~\ref{d:3} for this point in the
program code/control graph.  The compiler has to remember the offset
deltas as it works through the code, and this database serves as
scratchpad.

Routinely, the compiler (any compiler) also maintains a `location'
database $L:{\rm Var}\to {\rm Loc}$ mapping source variables to register
and memory locations.  An intermediate layer in the compiler handles
that and that matter is entirely elided here.

The reader uninterested in the detail that is provided can skip it. But
the detail is required in order to prove what will be claimed, namely
that the compilation method implements the tactic \eqref{e:maxH}.
There is no way of doing that other than by giving detail.

\subsection{Expressions}
\label{ss:Expr}

\noindent
Filling in \eqref{e:1} in more detail, compiling an expression $e$ to
code ${\it mc}_e$ that will get
the result in register $r$ at runtime means the compiler does a
computation
\begin{align}
({\it mc}_e,\Delta r) &= \Ss{D : e}^r
\label{e:3}
\end{align}
where {\it mc} is machine code (type MC), a sequence of machine code
instructions, and $\Delta r$ is the integer offset (type Off) from
nominal beneath the encryption that the compiler intends for the result
in $r$ at runtime.  Recall that $L$ is the location database mapping
source code variables to register and memory locations, and $D$ is
the database containing the `obfuscation vector' at this point in the
code, a list of planned delta offsets at runtime beneath the encryption
per location.
The question is whether the compiler has freedom of choice in choosing
$\Delta r$.  It might be that instructions are not available in the
instruction set by which it could vary $\Delta r$ arbitrarily and
equiprobably across recompilations.

To translate $x{+}y$, where $x$ and $y$ are signed integer
expressions, the compiler 
emits machine code $\hbox{\em mc}_1$
computing expression $x$ in register $r_1$ with offset $\Delta r_1$,
and emits machine code $\hbox{\em mc}_2$
computing expression $y$ in register $r_2$ with offset $\Delta r_2$. By
induction that is:
\begin{align}
(\hbox{\em mc}_1,\Delta r_1) &= \Ss{D : x}^{r_1}
\tag{\ref{e:3}$^x$}
\\
(\hbox{\em mc}_2,\Delta r_2) &= \Ss{D : y}^{r_2}
\tag{\ref{e:3}$^y$}
\end{align}
It decides a random offset $\Delta r$ in $r$ for the whole expression
$e$, emitting the compound code
\begin{align}
\hbox{\em mc}_e &= \hbox{\em mc}_1;\,\hbox{\em mc}_2;
                \,{\bf add}~r ~r_1 ~r_2 ~k^\E\notag
\end{align}
where ${\bf add}\,r \,r_1 \,r_2 \,k^\E$ is the 
integer addition instruction, with 
semantics $r {\leftarrow} r_1 {[+]} r_2 {[+]} k^\E$,
and $k  = \Delta r - \Delta x - \Delta y$ has been designed
so the sum is returned in $r$ with offset $\Delta r$ beneath the
encryption.  That is:
\begin{align}
(\hbox{\em mc}_{x{+}y},\Delta r)
&=
\Ss{D: {x{+}y}}^r 
\tag{\ref{e:3}$^+$}
\end{align}
implementing \eqref{e:3}.
The offset $\Delta r$ is freely chosen.
This construct introduces one `arithmetic instruction that
writes', the {\bf add}, and one arbitrarily mutable offset for it,
$\Delta r$.
That implements the tactic \eqref{e:maxH}. There is nothing special
about the `+' here that has been taken as an example, so \eqref{e:3}
holds inductively of all pure expressions.
The (trivial) base case is
for a simple variable reference $x$ as expression $e$, which 
takes a single trivial `+0' addition instruction to bring it out of the
register $r_x$ to which it is mapped by $L$ and into $r$. The
compiler may substitute an arbitrarily chosen `+$k$' instead of `+0',
thus setting the offset $\Delta r$ in $r$ as it wills, satisfying
\eqref{e:3}.

The construction also gives \eqref{e:maxH}, as $\Delta r$
for $x+y$ is free and exactly one new `instruction that writes' to $r$
(the ${\bf add}\,r \,r_1 \,r_2 \,k^\E$) is involved for it in ${\it
mc}_e$ and the constant in that is freely varied without restriction,
which is sufficient to vary $\Delta r$ freely by \eqref{p:2c}.
By induction every `instruction that writes' in ${\it mc}_x$ and ${\it
mc}_y$ already is freely varied without restriction.  The base case for
a single instruction reference likewise involves one addition
instruction with a freely variable constant.  In conclusion,
\eqref{e:maxH} holds inductively of all pure expressions.

Literal constants (0-ary arithmetic operations) as
expressions are implemented by a completely random choice of value by
the compiler in register $r$.  The database $D$ merely records the
offset from the nominal (`intended') value.

\subsection{Statements}
\label{ss:Stat}

\noindent
The compiler for statements $s$ changes
the database $D$ of deltas at multiple locations. The abstract,
high-level description of what it does in delivering code ${\it mc_s}$
is:
\begin{align}
D' : {\it mc}_s &= \Ss{D : s}
\label{e:4}
\end{align}
A less formally complete exposition will be given than for
expressions, to relieve the reader. It merely has to confirm that
\eqref{e:maxH} is satisfied.
Consider an assignment statement $z{=}e$ (which statement will be called $s$)
to a source code variable $z$, which the
location database $L$ binds in register $r{=}Lz$.
By induction code $\hbox{\it
mc}_e$ for evaluating expression $e$ in temporary register $t_0$ at
runtime is emitted via the expression compiler as in \eqref{e:3} with
$t_0$ for $r$:
\begin{align}
(\hbox{\em mc}_e,\Delta t_0) = \Ss{D : e}^{t_0} \tag{\ref{e:3}$^e$}
\end{align}
A short form add instruction with semantics
$r \leftarrow t_0 \mathop{[+]} k^\E$
is emitted to change offset $\Delta {t_0}$ to a new randomly chosen offset $\Delta' r$
in register $r$:
\begin{align}
{\it mc}_s &= \hbox{\em mc}_e ;\,{\bf add}~r~{t_0}~k^\E \notag
\end{align}
where $k = \Delta' r - \Delta t_0$. That implements
\eqref{e:4} for assignment:
\begin{align}
D': {\it mc}_{z{=}e} &= \Ss{D: z{=}e}
\tag{\ref{e:4}$^{\rm ass}$}
\end{align}
where
the change in the database of offsets is at $r$, to $D' r = \Delta' r$.
The new offset $\Delta' r$ is
freely and randomly chosen by the compiler, supporting \eqref{e:maxH},
and {\em one} new `arithmetic instruction that writes,' the {\bf
add}, is accompanied by {\em
one} new random delta, supporting \eqref{e:maxH} (by induction, 
\eqref{e:maxH} is already true of the code implementing $e$).

The compilation of code constructs if, while, goto, sequence, is
entirely standard and is left as an exercise for the determined reader.
The model of proof above is followed to show \eqref{e:maxH} holds.  A
codicil is that at the end of both branches of conditionals, at the
beginning and end of loops, at source and target of gotos, the offset
deltas in the database $D$ must coincide for reasons of correctness of
the computational semantics, which limits the variability that the
compiler can achieve.  Within that constraint \eqref{e:maxH} is
satisfied.  The information theory is discussed in
Section~\ref{s:Theory}, but the actual code constructions are skipped
here, being clear to `one skilled in the art'.

\subsection{{Ramified Types}}
\label{ss:Long}

\noindent
The problem with types is that there are so many of them, and the
approach to representing them on an encrypted platform is not
intrinsically obvious.  A 64-bit integer could be represented by
encrypting the 64-bit plaintext into a 128-bit ciphertext, for example
(the platform we have used for prototyping is physically 128-bit).  Or
it could be broken into two 32-bit parts that are encrypted separately
as two 128-bit ciphertexts.

Likewise, the way the compiler ought to vary the data is not
intrinsically clear. Should a single 32-bit offset be applied
simultaneously to both 32-bit parts of a  64-bit number? Should
the additive carry be passed between the parts, or ignored?

The answers lie with principle \eqref{e:maxH}: every instruction that
writes in the trace must vary to the fullest extent possible.  With the
instruction set shown in Table~\ref{tb:1}, the same offset when writing
both 32-bit halves of a 64-bit number would mean that the second write
instruction could not vary in a way distinct from the first,
contradicting \eqref{e:maxH}.  So the delta offsets for the two 32-bit
halves of a 64-bit number must be separate and independent.

A similar consideration says that every entry in an array (and every
32-bit part of that entry) must have its own separate, independent,
delta offset.  The problem is that the compiler does not know which
array entry will be accessed at runtime, so cannot in principle
compile for any particular delta offset.  A solution would
be a single, shared 32-bit delta offset for every entry in the array
(and every part of every entry), so the compiler could predict the
offset to apply.  But that runs foul of \eqref{e:maxH}.  It also means
that when one entry is changed, since by \eqref{e:maxH} the write ought
to freely create a new offset, all the other entries in the array ought
to be brought into line for the new array-wide delta offset with a
`write storm', even though the source says they are not being written to.
That might be useful from the point of view of disguising which entry is
intended to be written to, but it is not computationally `efficient' to
have linear time complexity writes to an array.  On the other hand,
reads are constant time complexity, which is attractive (and what a
programmer expects).

The situation is worse again for pointers, which could point anywhere
(the compiler cannot generally predict). The argument
applied above would say that therefore every part of every data
structure must all, universally, share the same delta offset, which
makes nonsense of variability.  The only substantial variation in delta
offsets would be in registers, which pointers cannot reference, and
memory would get a single delta offset to be applied everywhere.
Another approach is needed and  the successful one is discussed below.

\subsubsection{Long types}

Firstly, the reasoning above concluded that to satisfy \eqref{e:maxH},
double length (64-bit) plaintext integers $x$ ought to be treated as
concatenated 32-bit integer `halves' $x=x^H \mathop.  x^L$, the high and
low 32 bits respectively.  The encryption $x^\E$ of $x$ occupies two
registers or two memory locations, containing the encrypted values
$\E[x^H]$, $\E[x^L]$ respectively.  That is not only not obvious but
also needs notation with which to express the corresponding
operational semantics, or the page would fill with $H$ and $L$ superscripts.

\begin{definition}
Encryption of 64-bit integers $x$ comprises encryptions of 
the 32-bit high and low bit halves separately:\kern-5pt
\label{d:2}
\begin{align*}
x^\E = \E[x]= \E[x^H\mathop. x^L]&=\E[x^H]\mathop. \E[x^L]
\end{align*}
\end{definition}
\noindent
Instructions that operate on encrypted 64-bit types contain
(encrypted) 64-bit constants to satisfy (\ref{p:2c}), in order that
64-bit delta offsets across the range can be achieved.
But they may and will be `added' as high+high, low+low separately, with
no carry.  A carry is prohibitively difficult to manage in the compiler
and it is not necessary from the point of view of range, and it is
justified by Remark~\ref{r:group} (any binary operation of a mathematical
group is valid).

\begin{definition}
Let $-^2$ and $+^2$ be the two-by-two independent application of
respectively 32-bit addition and 32-bit subtraction to the pairs of 32-bit
plaintext integer high-bit and low-bit components of 64-bit integers, with
similar notation for other operators. E.g.:
\begin{align*}
(u_1 \mathop. l_1)\mathop{+^2\,}(u_2 \mathop. l_2) &=
              (u_1 \mathop{+} u_2) \mathop.\, (l_1 \mathop{+} l_2)
\end{align*}
\label{d:5}
\end{definition}

\begin{definition}
Let $\tilde{*}$, $\tilde{\hbox{\small+}}$ etc.\ denote
multiplication, addition, etc.\ on 64-bit integers
written as two 32-bit integers.
\end{definition}
\noindent 
Then a suitable atomic encrypted multiplication operation  for the
instruction set working on encrypted 64-bit
operands $x^\E$, $y^\E$ and satisfying (\ref{p:2a}-\ref{p:2d}) gives
the result:
\begin{equation}
\E[(x \,\mathop{-^2}\, k_1) \,\tilde{*}\,
(y \,\mathop{-^2}\, k_2) \,\mathop{+^2}\, k_0]
\tag{$\tilde{*}$}
\end{equation}
Here $k_0$, $k_1$, $k_2$ are 64-bit plaintext integer constants 
embedded encrypted (per Defn.~\ref{d:2}) in the instruction as
$k^\E_i$, $i=0,1,2$.
Putting it in terms of the effect on
register contents, a suitable encrypted 64-bit multiplication
instruction is:
\[
r_0^H\mathop{.}r_0^L \leftarrow (r_1^H\mathop{.}r_1^L \mathop{[-^2]}k_1^\E)
                    \mathop{[\,\tilde{*}\,]}
                  (r_2^H\mathop{.}r_2^L\mathop{[-^2]}k_2^\E) \mathop{[+^2]}
                    k_0^\E
\]
For 64-bit operations the processor
partitions the register set into pairs
referenced by a single name. In those terms, the multiplication
instruction semantics simplifies to:
\[
r_0 \leftarrow (r_1 \mathop{[-^2]}k_1^\E)
                    \mathop{[\,\tilde{*}\,]}
                  (r_2\mathop{[-^2]}k_2^\E) \mathop{[+^2]}
                    k_0^\E
\]
In the instruction set, that is the primitive, atomic instruction
\begin{align*}
{\bf mull}\,r_0\,r_1\,r_2\,k_0^\E\,k_1^\E\,k_2^\E
\end{align*}
following the general assembly format and nomenclature of Table~\ref{tb:1}.
The {\bf l} suffix means it works on `long', i.e., 64-bit, integers.
The other arithmetic instructions follow the
same pattern, and compiled code for long integer
expressions and statements on the encrypted computing platform follows
exactly the form for 32-bit but with `{\bf
l}' instructions.  Just as for 32-bit, exactly one new encrypted
(64-bit) `arithmetic instruction that writes' is issued per compiler
construct, and through it, the 64-bit (i.e., $2\times32$-bit) delta offset
in the target may be freely chosen by the compiler, supporting
(\ref{e:S}) and \eqref{e:maxH}.

\subsubsection{Short Types}
\label{ss:ST}

\noindent
Machine code instructions that work arithmetically on `short' (16-bit)
or `char' (8-bit) or `bool' (1-bit) integers are not needed to compile
the C language at least, because short integers are immediately promoted
to 32-bit for arithmetic.  The compiler generates only internal
accounting for such {\em cast}s.

\subsubsection{{Arrays and Pointers}}
\label{ss:Arr}

\noindent
The natural way to bootstrap integers to arrays {\tt A} of
$n$ integers  is to imagine a set of integer
variables {\tt A}$_0$, {\tt A}$_1$,
\dots one for each entry of the array.  That allows the compiler to translate a
lookup {\tt A}[$i$] as if it were code
\[
\tt{\bf int}~{\tt t}=i;~ \tt({\tt t}=0){\bf?}A_{\rm _0}{\bf:}({\tt t}=1){\bf?}A_{\rm _1}{\bf:}\dots
\]
using a temporary variable {\tt t} and the C ternary operator `\_?\_:\_'.
A write {\tt A}[$i$]=$x$ can be translated as if it were
\[
\tt{\bf int}~{\tt t}=i;~ \tt{\bf if} ({\tt t}=0)~A_{\rm _0}={\it x}~{\bf else}~{\bf if} ({\tt
t}=1)~\dots
\]
The entries in the array get individual
offsets from nominal $\Delta{\tt A}_0$, $\Delta{\tt A}_1$, \dots in the
obfuscation scheme maintained by the compiler. Amazingly, that is 
right according to the discussion with respect to what \eqref{e:maxH}
implies at the beginning of this section, yet it is non-obvious.
One reason why it is non-obvious to a compiler expert is that both on read
and write the scheme is linear time in the size $n$ of the array
(it can, however, easily be improved to log complexity) and array access
`should' be constant complexity. It is also apparently going to be
impractical for pointers, where the expressions and statements above
would have to be $2^{32}$ elements long, as where pointers land at
runtime cannot be predicted.

Nevertheless, the extra complexity may be acceptable in this context --
array lookup ought ideally to at least simulate looking at each entry
(or many of them) in the array in order to disguise which is read, so it
should not be dismissed on that basis.  Multi-core machines may even be
able to execute the component elements simultaneously.

The important technical points of the scheme above are that
(a) the in-processor equality test ignores differences between different
encryption aliases of the index and (b) an invariant set of encrypted
addresses $\E[{\tt\&A}_0]$, $\E[{\tt\&A}_1]$, etc.\ are passed to
memory, so lookup is always to the same place even though memory does
not decrypt addresses (c.f. the discussion in Section~\ref{s:Intro}).
(Memory access can be said to be subject to {\em hardware aliasing}
\cite{Barr98} in the encrypted computing context: i.e., different
encryption aliases of an address access different memory locations, and
(a) and (b) combined beat that.)

The scheme works for pointers {\tt p} too, with lookup {\tt*p}
being compiled like this:
\[
\tt({\tt p}=\&{\tt A}_{\rm 0}){\bf?}{\tt A}_{\rm 0}{\bf:}({\tt p}=\&{\tt A}_{\rm 1}){\bf?}{\tt A}_{\rm 1}{\bf:}\dots
\]
The pointer {\tt p} must be known to be in {\tt A}, so we have
modified C to declare pointers along with a global range that they
point into:
\begin{center}
       {\bf restrict} {\tt A} {\bf int} *{\tt p}{\bf;}
\end{center}
There is still a problem with the scheme in general, however, with
respect to the principle \eqref{e:maxH}. That is that
every access to the $i$th entry ${\tt A}[i]$ is via
precisely the same encryption alias $\E[\&{\tt A}_i]$ as address, and
though it beats the hardware aliasing effect, which memory location it
accesses is visible, hence counts in itself as a `write', yet it does not
vary as it might.
The processor has to repeat exactly the following calculation
to get the address right again and again. An improvement can be made,
but first the code involved has to be listed explicitly (the reader can take
the listing in the next few lines for granted and skip).

Say ${\tt A}$ starts at the $n$th stack location, so ${\tt
A}[i]$ is the $n{+}i$th (assuming word sized entries).
The plaintext address is ${\it sp} {+} (n{+}i)$, where ${\it sp}$ is
the address of the base of the stack. Read should normally be via
this pattern of load word instruction:
\[
   {\bf lw} ~r~\E[n{+}i]({\bf sp}).
\]
That causes the processor to sum $(n+i)^\E \mathop{[+]}
s^\E$ where $s^\E$ is the value in {\bf sp}. It passes
the result as effective address.
But the value in $s$ differs from its nominal value $\it sp$ by
a $\Delta {\bf sp}$ planned by the compiler, so the
read instruction must be:
\[
   {\bf lw} ~r~\E[(n{+}i){-}\Delta{\bf sp}]({\bf sp})
\]
The same cipher\-text $\E[(n{+}i){-}\Delta{\bf
sp}]$ must be used as the constant at every read.

For write, the compiler emits the corresponding store:
\[
   {\bf sw}~\E[(n{+}i){-}\Delta{\bf sp}]({\bf sp})~r
\]
That works around the `hardware aliasing' effect in encrypted computing
but does not support
\eqref{e:maxH} because the effective address could 
be varied, as explained below.

\subsubsection{Varying Addresses}
\label{ss:VA}

\noindent
To satisfy \eqref{e:maxH} the compiler should vary the address used at
every write, by choosing a new encryption alias for $\E[\&{\tt A}_i]$ so a
new memory location is written.  Reads will be from there till the next
write.  That does satisfy (\ref{e:S}).

An array or variable on the heap instead of stack
necessitates the heap base address register ({\bf zer}) instead of {\bf
sp} in the load/store instructions, otherwise code is the same.

Unfortunately, using a new address all the time quickly fills the
address mapping cache (the TLB) at runtime with addresses that will
never be used again yet occupy translation slots. The solution is
given below and again involves the compiler.

At the ends of loops (and after the then/else blocks of conditionals, at
return from a function, at the label target of {\bf goto}s, and wherever two
distinct control paths join) the compiler issues 
instructions to restore the original address used by copying the data
back to there from the address currently in use. The original delta
offsets also have to be restored, but we will suppose that is done
separately.
Say the variable in question's location is nominally the $n$th on the
stack. The resynchronization instruction sequence is
\[
{\bf lw} ~r~\kappa_1({\bf sp}); ~{\bf sw}~\kappa_0({\bf sp})~r
\]
where $\kappa_0$, $\kappa_1$ are the encryption aliases for $\E[n{-}\Delta{\bf
sp}]$ in use at the beginning and end of the loop respectively.
That reads from the one address then writes to the other.

The solution to the problem that varying the address used to access
arrays and variables fills the TLB with mappings that will not be used
again is that the compiler issues an instruction sequence to remove the
mapping for the defunct address:
\[
   {\bf addi}~ r~{\bf sp}~ \kappa_1;\,{\bf mtspr}~{\bf UDTLBEIR}~r;
\]
The addition instruction reproduces exactly the processor pipeline
calculation that forms the effective address, and the
`move to special purpose register' ({\bf mtspr}) instruction sends it to
the special `user data TLB entry invalidate register' ({\bf UDTLBEIR}),
which affects the TLB.

The register needs an instruction to be executed in operator mode for
access to succeed, so a system call is required, but there is no
information leak because in user mode the same instruction would be used
and it only carries the effective address, which is visible by which
memory location is accessed.  The compiler can save up these sequences
till the end of a code block or the function body in order to keep 
defunct entries longer in the TLB (the advantage is that of bank robbers
who shake their pursuers by swapping getaway cars in a busy car park
instead of a quiet cul-de-sac).

\subsubsection{{Structs and Unions}}
\label{ss:Struct}

\noindent
C `structs' are records with fixed fields.  They cause no problem for
the compiler at all.  It treats each field in a variable of struct type
like a separate variable.  That is, for a variable {\tt x} of struct
type with fields\ {\tt.a} and\ {\tt.b} the compiler invents variables,
{\tt x.a} and {\tt x.b}.

For an array {\tt A} with $N$ entries that are structs, the
compiler invents $2N$ variables ${\tt A}_i{\tt.a}$ and ${\tt
A}_i{\tt.b}$. Access to ${\tt A}[i]{\tt.b}$ causes the
compiler to emit code that tests only the {\tt.b} 
addresses in the range dedicated to {\tt A}, half of the total.

Unions have a surprise.  The correct code for accessing a union member
is exactly that for accessing a variable {\tt x.b} sited at the start of
the union {\tt x}.  But they also provide another indication that the
correct way to treat arrays and other long types is via one delta offset
per entry, not one delta offset shared for every entry, despite the
compiler problems and inefficiencies it causes. The reason is that a
union of an array with a struct (a common programming meme) would force
all the struct's fields to the same (single, unique) delta offset as the
array. That goes against the principle \eqref{e:maxH}.

\subsection{Memory Example}
\label{ss:Ex}

\newenvironment{code}{\begin{scriptsize}\begin{alltt}}{\end{alltt}\end{scriptsize}}
\def\MY[#1]{\(\E[\hbox{\tt#1}]\)}
\def\la{\(\leftarrow\)}

\begin{table*}[!tp]
\caption{Trace for sieve showing hidden padding bits in data (right).
Stack read and write instruction lines are in red, address base
(register content, right) in violet and address displacement
(instruction constant, left) in blue.}
\label{tab:4}
\flushleft
\begin{tabular}{@{}l@{}r@{}}
\begin{minipage}{0.95\textwidth}
\begin{alltt}\scriptsize
{\rm{PC}}    {\rm{instruction}}                         {\rm{trace updates}}         | {\rm{hidden}}
\dots
{\color{red}19300 addi t1 sp \MY[-407791003]         t1 \la \MY[{\color{violet}-866593752}|1532548040]}  \,#{\rm write local array}
{\color{red}19320 sw   \MY[{{\color{blue}866593746}}](t1) t0         mem[\MY[-6|-712377144]]}           #{\rm }a[7]{\rm at}{\it sp}{\rm+40}
{\color{red}                                         \la \MY[-866593745|1800719299]}
\dots
{\color{red}20884 addi t1 sp \MY[-1763599776]        t1 \la \MY[{\color{violet}2072564771}|-1935092797]} \,#{\rm write local variable}
{\color{red}20904 sw   \MY[{{\color{blue}-2072564772}}](t1) t0       mem[\MY[-1|1518992593]]}           #{\rm }i{\rm at}{\it sp}{\rm+45}
{\color{red}                                         \la \MY[2072564779|-1773201679]}
\dots
22340 addi t1  sp  \MY[-418452205]       t1 \la \MY[-877254954|1532548040]
22360 bne  t0  t1  84
{\color{red}22384 addi t1  sp  \MY[-407791003]       t1 \la \MY[{\color{violet}-866593752}|1532548040]}  #{\rm read local array}
{\color{red}22404 lw   t0  \MY[{{\color{blue}866593746}}](t1)        t0 \la \MY[-866593745|1800719299]}  #{\rm }a[7]{\rm at}{\it sp}{\rm+40}
22424 addi t0  t0  \MY[-1668656853]      t0 \la \MY[1759716698|1081155516]
22444 b    540
22988 addi t1  zer \MY[1759716697]       t1 \la \MY[1759716697|1325372150]
23008 bne  t0  t1  44
\dots
{\color{red}23128 addi t0  sp  \MY[-1763599776]      t0 \la \MY[{\color{violet}2072564771}|-1935092797]} #{\rm read local variable}
{\color{red}23148 lw   t0  \MY[{{\color{blue}-2072564772}}](t0)      t0 \la \MY[2072564779|-1773201679]} #{\rm }i{\rm at}{\it sp}{\rm+45}
23168 addi t0  t0  \MY[1723411350]       t0 \la \MY[-498991167|-981581771]
23188 addi t0  t0  \MY[-1862832992]      t0 \la \MY[1933143137|-1629507929]
23208 addi v0  t0  \MY[-1933143130]     \fbox{v0 \la \(\E[7\)}        \:|1680883739\(]\)  #{\rm return}
\dots
23272 jr   ra
STOP
\end{alltt}
\end{minipage}
&
\end{tabular}
\end{table*}

Running a Sieve of Eratosthenes program\footnote{Sieve C code:
{\bf int}
S({\bf int} n\,) \{ {\bf int} a[N]=\{[0\dots N-1]=1,\};
    {\bf if} (n$>$N$||$n$<$3) {\bf return} 0;
    {\bf for} ({\bf int} i=2; i$<$n; ++i) \{
        {\bf if} (!a[i]) {\bf continue};
        {\bf for} ({\bf int} j= 2*i; j$<$n; ++j) a[j]=0;
        \}; 
    {\bf for} ({\bf int} i=n-1; i$>$2; {-}{-}i)
        if (a[i]) {\bf return} i;
    {\bf return} 0;
\}
.
}
for primes shows up well how memory accesses are affected by encrypted
address aliasing.

The final part of the trace  is shown in Table~\ref{tab:4} with
two reads from elements on the stack shown in red. The
address base (in register) and address displacement (a constant in the
load/store word instruction) are shown in violet.
The assignments to these stack locations are up-trace and do have the
same address base and displacements as in the later reads.
The plaintext addresses -6, -1 reflect the fact that the stack grows down
from top of memory (\mbox{-1}), but it is the $\E[$-6$|$-712377144$]$
(address -6, padding -712377144) and
$\E[$-1$|$1518992593$]$, the encryptions of the compound of address and
hidden bits, that  are passed as the effective addresses to the
memory unit.

\section{Information Theory of Obfuscation}
\label{s:Theory}

\noindent
In this section the degree of independence of data beneath the
encryption in a runtime trace for a program compiled satisfying
\eqref{e:S} and \eqref{e:maxH} will be quantified, improving
on the known $\rho$CSS result in \cite{BB18c}.
A {\em trace} $T$ is the runtime sequence of
writes to registers and memory locations.  If a location is read for the
first time without it having previously been written in the trace, then
that is an {\em input}.  There are no relevant differences in instruction
order or kind (opcode) in this context \ref{p:1b}.

Trace $T$ is a stochastic random variable, varying across
recompilations of the same source code by a
chaotic compiler.  The compiler chooses obfuscation schemes 
as described in previous sections, and the probability distribution
for $T$ depends on the distribution of those choices.  After an
assignment to a register $r$, the trace is longer by one: $T' = T
{}^{\frown} \langle r{=}v^\E \rangle$.
Let $\H({T})$ be the {\em entropy} of {\em trace} $T$ in this
setting.  I.e., let $f_T$ be the probability distribution of $T$, the
entropy is the expectation
\begin{equation}
\H(T)=\mathds{E}[-\log_2 f_{T}]
\end{equation}
The increase in
entropy from $T$ to $T'$ (it cannot decrease as $T$ lengthens) is
informally the
number of bits of unpredictable information added.
Only these fragments of information theory will be required:

\begin{fact}
The flat distribution
$f_{X}{=}1/k$ constant is the one with maximal entropy
$\H({X}){=}\log_2 k$, on a signal $X$ with $k$ values.
\label{f:1}
\end{fact}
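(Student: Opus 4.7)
The plan is to verify this classical fact by a short information-theoretic argument. Let $p_i = f_X(i)$ for $i = 1, \ldots, k$, with $\sum_i p_i = 1$ and the standard convention $0 \log_2 0 = 0$ so that indices with $p_i = 0$ contribute nothing. I would rewrite the entropy as
\[
\H(X) = \sum_{i=1}^k p_i \log_2(1/p_i),
\]
i.e., as the expectation under $p$ of $\log_2(1/p_i)$.

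The key step is to apply Jensen's inequality to the concave function $\log_2$: the expectation of $\log_2(1/p_i)$ is at most $\log_2$ of the expectation of $1/p_i$. Restricting the sum to the support $S = \{i : p_i > 0\}$, this gives
\[
\H(X) = \sum_{i \in S} p_i \log_2(1/p_i) \le \log_2 \Bigl( \sum_{i \in S} p_i \cdot (1/p_i) \Bigr) = \log_2 |S| \le \log_2 k.
\]
Jensen's inequality is strict unless $1/p_i$ is constant on the support, and the bound $|S| \le k$ is strict unless $S = \{1,\ldots,k\}$; both equalities force $p_i = 1/k$ for every $i$. Substituting $p_i = 1/k$ gives $\H(X) = \log_2 k$, realizing the bound.

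As an alternative closing route, one can instead invoke Gibbs' inequality (nonnegativity of KL divergence) with $u_i = 1/k$: $D(p\|u) = \sum_i p_i \log_2(p_i k) = \log_2 k - \H(X) \ge 0$, with equality iff $p = u$. The only subtlety is the treatment of zero-probability atoms, which I expect to handle uniformly via the $0 \log_2 0 = 0$ convention; there is no substantive obstacle since the statement is a textbook consequence of concavity of $\log_2$.
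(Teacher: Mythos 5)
Your proof is correct: both the Jensen route and the Gibbs/KL route are standard and airtight, and your handling of zero-probability atoms via the $0\log_2 0 = 0$ convention is the right bookkeeping. Note, however, that the paper offers no proof of this statement at all --- it is labelled a \emph{Fact} and invoked as known background from information theory (alongside Fact~2) --- so there is no in-paper argument to compare against; your write-up simply supplies the textbook justification the author takes for granted.
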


\begin{fact}
Adding a maximal entropy signal to any random variable on a $n$-bit
space ($2^n$ values) gives another maximal entropy, i.e., flat, distribution.
\label{f:2}
\end{fact}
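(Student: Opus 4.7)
The plan is to reduce the entropy statement to a direct distributional computation using Fact~\ref{f:1}. By Fact~\ref{f:1}, saying the added signal $Y$ has maximal entropy on the $n$-bit space is equivalent to saying $f_Y(y) = 2^{-n}$ for every $y$ in the $2^n$-element value set. So the claim to prove becomes: if $Y$ is uniform on the group $G = \mathbb{Z}/2^n\mathbb{Z}$ (or, per Remark~\ref{r:group}, any finite group of order $2^n$) and $X$ is independent of $Y$, then $Z = X + Y$ is uniform on $G$.

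The computation I would carry out is the standard convolution. For any fixed $z \in G$,
\begin{align*}
f_Z(z) \;=\; \sum_{x \in G} f_X(x)\, f_Y(z - x) \;=\; \sum_{x \in G} f_X(x)\cdot 2^{-n} \;=\; 2^{-n},
\end{align*}
where the second equality uses that $f_Y$ is the constant $2^{-n}$ and the last uses that $f_X$ sums to $1$. Hence $f_Z$ is flat, and Fact~\ref{f:1} (applied in the opposite direction) yields $\H(Z) = \log_2 2^n = n$, the maximum on the space.

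Two small points I would flag for care. First, independence of $X$ and $Y$ is what licences the factorisation $f_{X,Y}(x,y) = f_X(x) f_Y(y)$ under the convolution sum; in the intended application the `maximal entropy signal' is a freshly drawn offset introduced by the compiler at a distinct write, and so is independent of prior trace content by construction. Second, the `$+$' must be the group operation on the $2^n$-element space so that the map $x \mapsto z - x$ is a bijection of $G$; this is exactly what makes the inner sum collapse, and it is precisely the hypothesis recorded in Remark~\ref{r:group}. I do not expect a genuine obstacle here --- the only subtlety is being explicit about independence and the group structure, after which the proof is a one-line convolution.
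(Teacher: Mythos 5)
Your convolution argument is correct and is the canonical proof of this statement. For comparison: the paper offers no proof at all --- Fact~\ref{f:2} is simply asserted as one of the ``fragments of information theory'' to be taken as known, so there is no in-paper argument to measure yours against. Your write-up does add genuine value on two points the paper's wording leaves implicit. First, independence of the fresh offset from the prior trace content is essential (without it the statement is false: take $Y=-X$ and the sum is constant, with zero entropy), and you correctly observe that in the intended application independence holds by construction because the compiler draws each new delta freshly. Second, the collapse of the convolution sum requires that $x \mapsto z - x$ be a bijection, i.e., that ``$+$'' be a group operation on the $2^n$-element space, which is exactly the hypothesis of Remark~\ref{r:group}. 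No gaps; the proof is complete as written.
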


\noindent
If the offset $\Delta r$ beneath the encryption is chosen randomly and
independently with flat distribution by the compiler, so it has maximal
entropy, then $\H(T')=\H(T)+32$, because there are 32 bits of
unpredictable information added via the 32-bit delta to the 32-bit
value beneath the encryption, so the 32-bit sum of value plus delta
varies with (32-bit) maximal entropy.

Although per instruction the compiler has free choice in accord with
\eqref{e:maxH}, not all the register/memory write instructions issued by
the compiler are jointly free as to the offset delta for the target
location -- it is constrained to be equal at the beginning and end of
a loop, and in general at any point where two control paths join
(\ref{p:5b}):

\begin{definition}
An instruction emitted by the compiler that adjusts the offset in
location $l$ to a final value common with that in a joining control path
is a {\em trailer} instruction.
\end{definition}
\noindent
Trailer instructions come in {\em sets} for each location $l$ at a
control path join, with one member on each path.  Each is last to write
to $l$ before the join.
In particular, there are trailer instructions before 
return from a subroutine.

Because running through the same instruction or a different instruction
with the same delta offset for the target location a second time does
not add any new entropy (the delta is determined by the first
encounter), the total entropy in a trace can be counted as follows:

\begin{lemma}
The entropy of a trace compiled according to \eqref{e:maxH} is $32(n+m)$
bits, where $n$ is the number of distinct arithmetic instructions that
write in the trace, counted once only per set if they are one of a set
of trailer instructions for the same location, and once each if they are
not, and $m$ is the number of input words.
\label{l:3}
\end{lemma}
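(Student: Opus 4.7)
The plan is to compute $\H(T)$ via the chain rule
$\H(T) = \sum_i \H(T_i \mid T_{<i})$,
where $T_i$ is the $i$th event appended to the trace, and to show that each event contributes either 32 or 0 bits of conditional entropy. Summing over the 32-bit contributions, grouped per the lemma's counting convention, will yield $32(n+m)$.

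The classification I would use partitions the events into five cases: (a) input words, i.e., first reads of locations never previously written; (b) executions of an arithmetic `instruction that writes' for which the compiler has freshly drawn the target location's delta offset (the instruction has not been seen before at this program point and no earlier trailer in its set has already fixed the common post-join delta); (c) re-executions of an instruction already appearing earlier in the trace at the same program point (subsequent loop iterations), whose delta is already committed; (d) trailer instructions whose set has a prior member in the trace, whose delta is constrained by (\ref{p:5b}) to equal the already-chosen common value; and (e) copy instructions, which by (\ref{p:5a}) reproduce the ciphertext of their source. Case (a) contributes 32 bits by Fact~\ref{f:1} together with the assumption that each input word is drawn uniformly across the 32-bit plaintext space. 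Case (b) contributes 32 bits because the tactic \eqref{e:maxH} forces the compiler to draw the fresh delta $\Delta l$ uniformly and independently from the 32-bit range; Fact~\ref{f:2} then gives the ciphertext's plaintext $v = s_0(e)+\Delta l \bmod 2^{32}$ a flat conditional distribution given $T_{<i}$. Cases (c)--(e) each contribute 0 bits, because the new ciphertext is a deterministic function of $T_{<i}$ and the deltas already fixed.

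The counting in the lemma's statement matches cases (a) and (b) exactly as $m$ and $n$: $n$ counts distinct arithmetic write instructions with the proviso that each trailer set contributes just once --- only the first trailer of a set encountered in the trace falls in case (b); all its later siblings are case (d) and contribute nothing, in accord with ``counted once only per set''. The main obstacle is the semantic bookkeeping bridging cases (b) and (d): one must confirm that at the moment the compiler emits the first member of a trailer set it genuinely still has free choice of the common post-join delta (so that \eqref{e:maxH} can inject the full 32 bits), and that once this choice is made (\ref{p:5b}) fully pins down the delta for every subsequent trailer in the set as well as for every subsequent loop re-entry. A structural induction on the control flow graph --- matching every join point to a unique trailer set per written location and every back-edge of a loop to the corresponding entry deltas --- discharges this, after which mutual exclusion and exhaustiveness of (a)--(e) are routine, and summing the independent 32-bit contributions yields $\H(T) = 32(n+m)$.
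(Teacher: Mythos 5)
Your decomposition is essentially the paper's own argument: the paper also reduces $\H(T)$ to a sum of per-instruction contributions, shows that the only non-determined part of each instruction's contribution to the trace is the post-instruction delta $D'$ in the obfuscation scheme (32 bits when freshly and uniformly drawn under \eqref{e:maxH}, 0 bits on a second appearance of the instruction, for a non-first trailer of a set, or for a copy), and handles the ``first trailer is free, later trailers are pinned'' point by the same structural induction on the control graph that appears in the proof of Lemma~\ref{at:4}. So cases (b)--(e) and the counting of $n$ are fine.

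The one step whose justification would not survive in the paper's model is your case (a). You derive the 32 bits per input word from ``the assumption that each input word is drawn uniformly across the 32-bit plaintext space.'' But the random variable here is the trace \emph{across recompilations of the same source code on the same data}; the user's input is fixed, not uniform, and the lemma carries no hypothesis on the input distribution. Under your justification a fixed input would contribute 0 bits and the total would drop to $32n$. The paper instead obtains the $32m$ term from the compiler: an input location $l$ read before being written carries a planned offset $D_0\,l$ in the obfuscation scheme, chosen uniformly by the compiler and imposed on the provider of the input, so the word observed beneath the encryption is $x + D_0\,l$, which is flat by Fact~\ref{f:2} regardless of $x$. Re-grounding case (a) in that compiler-chosen input delta (exactly as in your case (b)) closes the gap; everything else in your argument then goes through as in the paper.
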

\noindent
Recall `input' is provided by those instructions that read
first in the trace from a location not written earlier in it.

Observing data at any point in the trace sees variation across
recompilations.  The principle \eqref{e:maxH} asserts that every
opportunity provided by an arithmetic instruction that writes is taken
by the compiler to introduce new variation.  At `trailers'
the compiler organizes several instructions to synchronize final
deltas in different paths but that is sometimes unnecessary because a
location will be rewritten before it is ever read again.  In such cases,
the variation the compiler introduces is not maximal because it could
be increased by varying deltas independently.
So consider that compiler constructions might be embedded in a context
that reads all locations.  Then trailer synchronization is necessary and
the compiler introduces the maximal entropy possible:

\begin{proposition}
The trace entropy of context-free compiler constructions
that conform to \eqref{e:maxH}
is maximal with respect to
varying the constants in the machine code.
\label{t:3}
\end{proposition}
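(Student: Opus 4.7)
The plan is to show that $32(n+m)$ is both a lower and an upper bound on the trace entropy, making it the maximum achievable by any scheme that varies only the encrypted instruction constants within the fixed control-flow skeleton. Lemma~\ref{l:3} already supplies the lower bound for compilers conforming to \eqref{e:maxH}, so the substantive work is proving a matching ceiling that applies to every constant-varying strategy whatsoever.

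First I would walk the trace and classify each entry by the mechanism that produces it. By property~(\ref{p:5a}) a copy contributes no new plaintext information beneath the encryption, so it is entropy-neutral. An input is read from a location not previously written and, regardless of the compiler, can supply at most $32$ bits of plaintext entropy per word. An arithmetic write produces a plaintext output that is a deterministic function of (already-accounted-for) earlier plaintexts together with the plaintexts of the encrypted constants embedded in the instruction. Since the per-instruction freedom sits in a $32$-bit plaintext delta determined by those constants (this is exactly what condition~(\ref{p:2c}) buys us, no more and no less), Fact~\ref{f:1} bounds the entropy contribution of any single non-trailer arithmetic write by $32$ bits, regardless of the distribution chosen for its constants.

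Next I would address the trailer constraint, which is where the \emph{context-free} hypothesis earns its keep. Within a set of trailer instructions writing a common location $\ell$ at a control-path join, property~(\ref{p:5b}) forces the plaintext delta at $\ell$ after the join to agree along every incoming path; equivalently, once the delta on one path is fixed the constants on the other paths in that set are determined up to the already-accounted entropy of upstream writes. Hence a whole trailer set contributes at most a single $32$-bit plaintext delta, i.e.\ at most $32$ bits, not $32$ bits per member. The context-free proviso blocks the escape route of leaving $\ell$ unsynchronised on the grounds that it will be overwritten before any read; if we assume every location is eventually observed, no such saving can be cashed in. Combining the cases and using Fact~\ref{f:2} to add the independent contributions gives $\H(T)\le 32(n+m)$, matching Lemma~\ref{l:3}.

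The main obstacle is the trailer set analysis. I must verify that the joint distribution of the constants in a trailer set really has entropy at most $32$ bits under any constant-varying strategy respecting~(\ref{p:5b}), and not $32k$ bits for a set of size $k$. This reduces to the linear-algebra observation that the delta equations the trailers must satisfy at the join are affine in the $k$ plaintext constants with a one-dimensional solution space over the $32$-bit word ring, so picking the delta on one branch rigidly determines the corrections on the others. Once that rank calculation is in hand the upper bound falls out, and the proposition follows because the strategy of \eqref{e:maxH} already saturates the bound by Lemma~\ref{l:3}.
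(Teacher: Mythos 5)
Your argument is correct and follows the same decomposition the paper relies on: entropy enters the trace only through one free 32-bit delta per distinct arithmetic write, trailer sets collapse to a single shared delta by (\ref{p:5b}), and the context-free hypothesis (a surrounding context that may read every location) makes that synchronization unavoidable rather than an optional sacrifice. The paper itself establishes this only informally in the paragraph preceding the proposition, leaning on the count in Lemma~\ref{l:3}, so your explicit matching upper bound --- that no constant-varying strategy can exceed $32(n+m)$ bits because each write's observable effect is a single 32-bit delta and each trailer set has a one-parameter family of admissible constants --- is precisely the step the paper leaves implicit; it is the same route, made two-sided.
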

\noindent
The proposition implies at least 32 bits of entropy in the variation
beneath the encryption must exist in any location $l$ 
where (1) the location has been written, or (2) read without a
prior write.  In (1) the datum is written by an instruction and the
compiler generates variations in the obfuscating delta $D' l$ in the
obfuscation scheme $D'$ after the instruction, or is copied exactly from
somewhere else that the compiler influences in that way. In (2) it is an
input, which is subject to planned variations $D_0 l$ that must be
satisfied by the provider of the input.

The\,following\,is\,obtained\,by\,structural\,induction\,in\kern-2pt\cite{BB17a}:\kern-10pt

\begin{corollary} {\rm(\ref{e:S})}
The probability across different compilations by a compiler that
follows principle \eqref{e:maxH}
that any particular 32-bit value has encryption $\E[x]$ in
a given register or memory location at any given point in the program
at runtime is uniformly $1{/}2^{32}$.
\label{t:4}
\end{corollary}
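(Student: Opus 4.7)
The plan is to reduce Corollary \ref{t:4} to Proposition \ref{t:3} (maximality of trace entropy for context-free compiler constructions satisfying \eqref{e:maxH}) together with Fact \ref{f:2} (a uniformly distributed summand flattens any distribution on a $2^{32}$-value space). The target claim is a statement about the marginal distribution at a single location and single program point, so most of the combinatorics that go into the entropy count of Lemma \ref{l:3} are not needed here; what is needed is that the marginal delta at that location is uniform.

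\medskip\noindent\textbf{Step 1 (shape of the content).} Fix a program point $p$ and a location $l$. By the abstract description \eqref{e:1}, \eqref{e:2} and the per-construct compilation schemes of Sections \ref{ss:Expr} and \ref{ss:Stat}, the ciphertext sitting in $l$ at $p$ is $\E[v + \Delta_p(l)]$, where $v$ is the nominal plaintext dictated by the source semantics and the inputs, and $\Delta_p(l)$ is the entry for $l$ in the obfuscation scheme $D$ (Defn.~\ref{d:3}) that the compiler holds at $p$.

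\medskip\noindent\textbf{Step 2 (the delta is uniform).} I would then case-split on how $l$ acquired its value: (a) $l$ was last written by an arithmetic ``instruction that writes'' whose output delta is a free choice of the compiler (per the {\bf add}/{\bf mull}/assignment constructions in Sections \ref{ss:Expr}, \ref{ss:Stat}, \ref{ss:Long}); (b) $l$ is an input read before any prior write, in which case its delta is the compiler-chosen $D_0\, l$ required of the input provider; (c) $l$ was filled by a copy/move, in which case its delta is inherited from an earlier location of type (a) or (b). In all three cases, Proposition \ref{t:3} applied to a context that reads $l$ (so that any trailers for $l$ are forced) guarantees that the marginal of $\Delta_p(l)$ across recompilations is flat on $\{0,1,\dots,2^{32}{-}1\}$: by (\ref{p:2c}) the compiler may vary the relevant instruction constant over the full 32-bit range, by \eqref{e:maxH} it does so independently, and by \eqref{e:S} it does so with uniform probability.

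\medskip\noindent\textbf{Step 3 (flatten).} With $\Delta_p(l)$ uniformly distributed mod $2^{32}$ and independent of the nominal $v$, Fact \ref{f:2} gives that $v + \Delta_p(l) \bmod 2^{32}$ is uniformly distributed. Since cipherspace equality is equivalence under $\D$, ``the location holds encryption $\E[x]$'' is the event $v + \Delta_p(l) \equiv x \pmod{2^{32}}$, which has probability $1/2^{32}$.

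\medskip\noindent\textbf{Main obstacle.} The delicate point is case (c) combined with trailer-instruction synchronization at joins (\ref{p:5b}): when two or more control paths converge, the deltas at $l$ on each incoming path must agree, which at first sight might correlate $\Delta_p(l)$ with $v$ or strip entropy from it. The substance of Proposition \ref{t:3} is precisely that even under these join constraints the compiler retains a free uniform choice of the common post-join delta, and that this choice is independent of the nominal value. I would therefore spend the bulk of the write-up making that independence explicit by a structural induction on the compiler's statement constructions (assignment, if, while, goto, sequence, call/return), verifying at each inductive step that the delta entering $p$ at $l$ is a measurable function only of the compiler's random bits and not of the input-determined $v$. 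Once that is established the uniform-$1/2^{32}$ conclusion is immediate from Fact \ref{f:2}.
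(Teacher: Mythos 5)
Your proposal is correct and matches the paper's own argument in essentials: the paper proves this (as Lemma~\ref{at:4} in the Appendix, and by structural induction in \cite{BB17a}) by exactly the same route — each writing instruction contributes one freely and uniformly chosen delta via its tweakable constant (\ref{p:2c}), the convolution of that uniform delta with the nominal value flattens the distribution (your Fact~\ref{f:2} step is the paper's explicit sum over $\Upsilon$), and joins are handled by observing that the common post-join delta was a free choice on the first-compiled path, with an `add zero' inserted where a path does not otherwise write the location. Your detour through Proposition~\ref{t:3} is unnecessary for the single-location marginal but not wrong, and your concluding structural induction on the statement constructions is precisely what the paper does.
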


\noindent
That formally implies Theorem~\ref{e:ddagger}, relative to the
security of the encryption. But a stronger result can now be obtained
from the lemma and proposition above:

\begin{definition}
Two data observations in the trace are (delta) {\em dependent} if
they are of the same register at the same point, input and output
of a copy instruction, or the same register 
after the last write to it in a control path before a join and
before the next write.
\label{d:8}
\end{definition}
\noindent
The variation in the trace observed at two (or $n$) independent
points is maximal possible:
\begin{theorem}
The probability across different compilations by a compiler that
follows principle \eqref{e:maxH} that any $n$ particular
32-bit values in the trace have encryptions $\E[x_i]$, provided they are
pairwise independent, is $1{/}2^{32n}$\kern-2pt.
\label{t:1}
\end{theorem}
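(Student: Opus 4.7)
The plan is to reduce Theorem~\ref{t:1} to an entropy accounting on the compiler's obfuscation scheme, using Lemma~\ref{l:3}, Proposition~\ref{t:3}, and Facts~\ref{f:1}, \ref{f:2}, with Corollary~\ref{t:4} as the base case $n=1$. The central observation is that every point where the trace is observed carries a ciphertext whose plaintext equals a nominal value (fixed once the source program and inputs are specified) shifted by a delta $\Delta_i$ drawn by the compiler from the obfuscation scheme $D$ at that point, per \eqref{e:1} and \eqref{e:2}. Corollary~\ref{t:4} is precisely the statement that each marginal $\Delta_i$ is distributed flatly over the 32-bit space.

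My first step would be to associate each of the $n$ observation points with the ``slot'' of the obfuscation scheme that determines its delta. Specifically, I would argue that given Definition~\ref{d:8}, two observations that are \emph{not} dependent in any of the three listed ways correspond to two genuinely distinct compiler choices: they are not the same location at the same program point; they are not tied by a copy instruction (which by \eqref{p:5a} forces the two deltas to coincide); and they are not joined across a control-path merge with no intervening write (which by \eqref{p:5b} forces the trailer deltas to agree). In every other case the compiler is free, under \eqref{e:maxH}, to pick a fresh 32-bit delta, and by construction in Section~\ref{ss:Expr} and Section~\ref{ss:Stat} it does so with flat distribution independently of all earlier choices.

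Second, I would invoke Lemma~\ref{l:3} and Proposition~\ref{t:3} to conclude that, for context-free constructions, the joint entropy of the tuple of $n$ deltas $(\Delta_1, \ldots, \Delta_n)$ attached to pairwise independent observation points attains the maximal value $32n$. Since these deltas live in a space of cardinality $2^{32n}$, Fact~\ref{f:1} forces the joint distribution of $(\Delta_1, \ldots, \Delta_n)$ to be uniform. Applying Fact~\ref{f:2} coordinatewise (or, equivalently, noting that the map $(\Delta_1,\ldots,\Delta_n)\mapsto(x_1^{\nu}{+}\Delta_1,\ldots,x_n^{\nu}{+}\Delta_n)$ is a bijection on $(\mathbb{Z}/2^{32}\mathbb{Z})^n$ for any nominal plaintexts $x_i^{\nu}$) yields that the tuple of observed plaintexts is jointly uniform, so any specified tuple $(x_1,\ldots,x_n)$ occurs with probability $1/2^{32n}$ across recompilations, and the same holds for the corresponding tuple of ciphertexts by the cipherspace equivalence convention.

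The main obstacle I expect is the gap between \emph{pairwise} independence (as given in Definition~\ref{d:8}) and the \emph{joint} independence required to multiply marginals. In general, pairwise independence of random variables does not imply mutual independence; here it does because the dependencies listed in Definition~\ref{d:8} exhaust all \emph{structural} mechanisms by which the compiler is forced to identify two deltas in the obfuscation scheme, and outside those mechanisms the deltas are \emph{constructed} independently by the code-generation rules established in Sections~\ref{ss:Expr}--\ref{ss:Struct}. I would therefore spend the bulk of the formal proof checking, case by case over the constructs of Section~\ref{s:Comp}, that any pair of trace positions not falling under Definition~\ref{d:8} is resolved through two distinct, independently chosen entries of $D$; once that is in place the entropy computation via Lemma~\ref{l:3} and Proposition~\ref{t:3} closes the argument.
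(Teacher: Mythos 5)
Your proposal follows essentially the same route as the paper, which derives Theorem~\ref{t:1} directly from Lemma~\ref{l:3} and Proposition~\ref{t:3} (with Corollary~\ref{t:4} as the $n=1$ case and the remark that each dependent pair costs 32 bits of entropy): joint entropy $32n$ on a space of size $2^{32n}$ forces a flat joint distribution of the deltas, and the bijection onto observed plaintexts transfers uniformity. Your explicit treatment of the gap between pairwise and mutual independence --- arguing that Definition~\ref{d:8} exhausts the structural mechanisms identifying two deltas, so that non-dependent positions resolve to independently drawn entries of the obfuscation scheme --- is a point the paper leaves implicit, and is the right way to close the argument.
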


\noindent
Each dependent pair reduces the entropy by 32 bits.

\section{Discussion}
\label{s:Discuss}

\noindent
Theorem~\ref{t:1} quantifies exactly the correlation that exists
in data beneath the encryption in a trace where the compiler
follows the principle \eqref{e:maxH} (every arithmetic
instruction that writes is varied to the maximal extent possible across
recompilations).  It names the points in the trace where the compiler's
variations are weak and statistical influences from the original source
code may show through.  For example, if the code runs a loop summing the
same value again and again into an accumulator, then looking at the
accumulator shows an observer $\E[a+ib+\delta]$ for a constant offset
$\delta$ and increasing $i$.  That is an arithmetic series with unknown
starting point $a+\delta$ and constant step $b$ and it is likely to be
one of the relatively few short-stepping paths, with small $b$. That
knowledge
can be leveraged into a stochastically based attack on the encryption.
But if the encryption has no weakness to that vector 
then there is no danger.  Such a characteristic of the encryption
would be expressed as `there is no polynomial time method that
determines $a$ or $b$ from a sequence $\E[a+ib]$ with probability
significantly greater than chance' (as block size $n \to \infty$).

A compiler following the principle \eqref{e:maxH} does as well as any
may to avoid weaknesses based on relations such as the above between
data at different points in the runtime program trace.  The only way to
eliminate them completely is to have no loops or branches in the object
code, by Theorem~\ref{t:1}.  That would be a finite-length calculation
or unrolled bounded loop with branches bundled as $\E[tx+(1-t)y]$ into
the calculation, where $x^\E$ and $y^\E$ are the outcomes from the two
branches executed separately, and $t$ is the boolean test result.
(Those are essentially the calculations available via FHEs.)

That speaks to classical concepts of obfuscation and security
via the following argument, which shows that the operator
does not win game G2 of Section~\ref{s:Key}.

\begin{claim}
Encrypted computing as described in this paper is
resistant to polynomial time (in the block/word size $n$) attacks 
by the operator on the runtime data beneath the encryption, provided
the encryption itself is resistant to such attacks.
\end{claim}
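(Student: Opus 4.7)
\section*{Proof proposal for Claim 1}

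The plan is to prove the contrapositive by a polynomial-time reduction: from any polynomial-time operator-strategy that wins game G2 of Section~\ref{s:Key} with non-negligible advantage I would construct a polynomial-time IND-CPA distinguisher against the underlying encryption $\E$. The core technical fact to exploit is Theorem~\ref{t:1}, which pins the joint distribution of independent trace observations to the uniform distribution on $2^{32n}$ values across recompilations; the remaining work is to show that (a) all adversary-useful correlations are already catalogued by the dependence relation of Defn.~\ref{d:8}, and (b) the operator's ability to interfere and experiment, rather than merely observe, does not add information outside of what a CPA oracle already provides.

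Step one: fix the target point $r_n$ polynomially specified by the operator and let $v = \D[r_n]$ be the unknown plaintext beneath the encryption. Let the operator's trace-view be any polynomially long sequence of observations $\zeta_1,\dots,\zeta_N$. I would partition these into equivalence classes under the dependence relation of Defn.~\ref{d:8}; within each class the plaintext values are forced to coincide (copy/trailer constraints), and across classes Theorem~\ref{t:1} asserts uniform independent distribution of the underlying plaintexts modulo $2^{32}$. Hence the marginal distribution at $r_n$, taken over the chaotic compiler's randomness and conditioned on the class-labels of all the other observations, is still uniform. So any non-negligible advantage the operator has in predicting $v$ must come from inverting the encryption of at least one ciphertext in $r_n$'s dependence class, rather than from trace structure.

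Step two: handle active interference. By axioms (\ref{p:2a}--\ref{p:2d}$^*$) (Box~\ref{b:2}) and Theorem~\ref{e:star}, any instruction the operator injects or replays takes encrypted inputs to encrypted outputs atomically, with runtime data cipherspace disjoint from that of instruction constants, so no cipherspace collision can be engineered and no CIA-style bootstrapping to a known-plaintext ciphertext is possible. Consequently an operator-interfered trace is indistinguishable, as a distribution over underlying plaintext equivalence classes, from an honest trace whose chaotic deltas have been shifted by whatever fixed offsets the operator's edits imply; step one then still applies. I would formalise this as a simulator argument: anything the operator learns from a perturbed run can be simulated from an honest run plus the operator's own arithmetic on the ciphertexts, using only calls to the encryption oracle.

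Step three: build the reduction. Given a poly-time operator $A$ winning G2 with advantage $\varepsilon(n)$, the IND-CPA adversary $D$ picks any program $p$ containing the target location, compiles it honestly under the challenger's public encryption, embeds the challenge ciphertext at $r_n$ (consistent with the obfuscation scheme by adding a random compiler-chosen delta), simulates the operator's interference using the oracle per step two, and outputs $A$'s guess. By steps one and two, $D$'s advantage is at least $\varepsilon(n)$ up to a negligible simulation loss, contradicting the stipulated security of $\E$.

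The main obstacle I expect is step two: one must argue that \emph{adaptive} operator manipulations across many runs, including replays of partial traces with substituted ciphertexts and measurements of whether the machine branches or silently produces nonsense, cannot effectively distinguish the true $v$ from a random shift of it, and that all such behaviours are faithfully simulable using only CPA-oracle access. This is essentially the non-trivial content of Theorem~\ref{e:star}/Lemma~\ref{at:4} extended from a single point to a polynomial family of queries; the delicate point is that the $n$-point version of Theorem~\ref{t:1} is stated only for \emph{passive} trace observations, so the simulator must absorb the operator's active edits into equivalent compiler-delta perturbations before invoking it.
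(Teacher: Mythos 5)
There is a genuine gap, and it sits exactly where you placed your step one. The paper's own proof hinges on a device you do not use: the user prepares a family of compilations $\C{P_n}$ in which $P$ has been \emph{unrolled super-polynomially}, so that the first $e^n$ machine-code instructions contain no loops or branches at all. Within the polynomial-length window the operator has time to examine there are then \emph{no} delta dependencies in the sense of Defn.~\ref{d:8}, Theorem~\ref{t:1} applies with full independence to every observation the operator can make, and the trace carries no information beyond what the bare ciphertexts do. Your argument instead keeps the looped code and tries to absorb the dependencies by partitioning observations into equivalence classes and asserting that the marginal at $r_n$, conditioned on the class labels, is still uniform. That assertion is not warranted by Theorem~\ref{t:1}, and the paper itself explains why it can fail: a loop that repeatedly updates an accumulator exposes $\E[a+ib+\delta]$ for fixed step $b$, an arithmetic progression whose common difference is \emph{not} re-randomised across iterations (the trailer constraints force the deltas to coincide at the loop join), and the Discussion explicitly flags this as leverage for a stochastic attack. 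In other words, dependent observations do not merely "coincide" as in a copy instruction; they satisfy nontrivial arithmetic relations that reveal plaintext structure, and Defn.~\ref{d:8} does not catalogue all of them in a way that lets you condition them away. The unrolling step is what removes this obstacle, and without it your reduction's simulation loss is not negligible.

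Your step two is also doing more work than the paper attempts: the paper's proof sketch treats the operator's interference only implicitly (via Theorem~\ref{e:star} and the earlier $\rho$CSS machinery) and does not construct a simulator for adaptive ciphertext-substitution attacks, so you have correctly identified a real delicacy there --- but you have identified it as your "main obstacle" while leaving the actual fatal gap in step one unrepaired. On the positive side, your reduction-to-IND-CPA framing is tighter in spirit than the paper's contradiction sketch (which never quantifies the advantage transfer), and if you graft the unrolling device onto your step three --- compile $P_n$ loop-free for the first $e^n$ instructions, then run your reduction on the resulting straight-line trace --- the conditional-uniformity claim in step one becomes exactly Theorem~\ref{t:1} with $n$ pairwise-independent points, and the argument closes.
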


\begin{proof}[Proof sketch]
Suppose the adversarial operator has a polynomial time (in
the number of bits $n$ in a word on the platform) method of working out
what the data beneath the encryption is in register $r$ at some
identified point in the trace of program $P$.  That point of interest
may even move (polynomially) with $n$, being specified perhaps as
`the point of last change in $r$ in the first $n^3$ steps'.  The
operator knows program $P$ and may have suggested it themself,
before the user compiles it.  The operator can see the compiled maximal
entropy code $\C{P}$, and see it running and probe it by running it
at will.

The user readies a sequence of compilations $\C{P_n}$ of $P$, as
described in this paper, with the $n$th being for a $n$-bit platform as
target, and $P$ having been partially or completely unrolled as $P_n$
with no loops or branches in the first $e^n$ (i.e., super-polynomially
many) machine code instructions.  If the program predictably ends before
then, it is to be unrolled completely.  These are compilations of the
same program $P$ all with the same end-to-end semantics that could be
produced entirely automatically by a monolithic compiler incorporating
the unrolling and branch bundling in its front end.  The operator is
invited to apply their method and predict the values beneath the
encryption at the chosen points in the runtime trace(s) of these
compiled codes, which may differ only in consequence of the number $n$
of bits in a word on the platform (i.e., the theory of arithmetic mod
$2^n$).

Theorem~\ref{t:1} implies the operator's method cannot exist as
follows.  There are no loops or branches (hence no delta dependencies,
per the terminology of the theorem and Defn.~\ref{d:8}) in the part of
the trace the observer has time to examine.  The theorem says the
compiler will have arbitrarily and independently varied what is meant by
the values throughout that length of the trace by varying the deltas
from the nominal value independently across each instruction in turn.
So there is no reliable relation between the data and the operator cannot 
use it as a decryption aid.  The operator is reduced to attacking the
encryption with no further information.  The encryption on its own is
hypothesized to be resistant to polynomial time attacks, and the claim
is proved by contradiction.
\end{proof}

The credibility of the argument is supported by the trivial case in
which the program unrolls completely.  Then it is equivalent to a logic
circuit in hardware, but with the data on every pin and wire converted
to encrypted form.  It is known from the theory of Yao's garbled
circuits \cite{yao86} that the intended values cannot be deciphered
without the garbling scheme, which equates to an obfuscation scheme of
deltas here: `+1 mod 2' is boolean negation on a 1-bit boolean
value, while `+0 mod 2' leaves the 1-bit boolean value unchanged.  The
obfuscation scheme is known only to the user, not the operator.

The argument could have been made in 1986 (the year of publication
of \cite{yao86}), if the hardware and electronic engineers had moved
their idea on into general computation.

A subtlety is that encryption appears not to be necessary for the
argument if one substitutes for `data beneath the encryption'
with `value as really intended by the user'.  Yao's garbled
circuits are already garbled without any encryption and equally the code
produced by the chaotic compiler is `obfuscated', with an obfuscation scheme
determining randomly chosen offsets from nominal at every point in the
trace.  It is a matter of conjecture for an observer as to what the
user/programmer really meant when 2 is in the trace, because if the
obfuscation scheme has -1 as delta at that point, then 3 was really
meant by the user, and if -12 is the delta, then 14 was really meant.
All options are equally feasible, and the compiler has rendered them 
equally probable. The practical problem is that the mode of 
compilation -- with exponential unrolling -- in the argument for the
Claim of this section leads to unfeasibly long machine code programs,
and in any case there is no guarantee that any particular word size $n$
is not vulnerable, only that in the limit any particular attack method
fails. But that may be moot too as hardware platforms are not
arbitrarily extensible in word length (perhaps they may become so).
Simulating a platform with word length $n$ on a 32 bit platform
does not satisfy the axiom of atomicity in Box~\ref{b:2}.

As a codicil, the treatment of short integer types here (they are
promoted to standard integer length) prompts the question of whether
entropy could be increased by changing to 64-bit or 128-bit plaintext
words beneath the encryption, instead of 32-bit, and correspondingly
sized delta offsets from nominal.  That logic appears correct.  The
32-bit range of variation of standard-sized integers would be swamped by
a 64-bit delta introduced by the compiler and the looped stepping
example $\E[a + ib + \delta]$ at the beginning of this section would
have a 64-bit $\delta$, so would have $2^{64}$ possible origin points
for the path for any hypothetical step $b$, not $2^{32}$.

\section*{Implementation}

At the current stage of development, our own prototype compiler 
(see \url{http://sf.net/p/obfusc}) has near
total coverage of {\sc ansi} C with GNU extensions, including
statements-as-expressions and expressions-as-statements.  It lacks {\bf
longjmp}, and computed {\bf goto}.
It is being debugged via the venerable {\em gcc\/}
`c-torture' test-suite v2.3.3
(\url{http://ftp.nluug.nl/languages/gcc/old-releases/gcc-2/gcc-2.3.3-testsuite.tar.bz2}),
and we are presently about one quarter way through that.

\section{{Conclusion}}

\noindent
How to compile all of {\sc ansi} C for encrypted computing without
decryption on the memory address (or data) path has been set out here.
That opens up the field for software development using the canonical
toolchain of compiler, assembler, linker, loader with operating system
support.  It allows processors for encrypted computing to access memory
at normal speeds.  The only hardware support needed is a unit
granularity address translation lookaside buffer that remaps encrypted
addresses first-come, first-served to physically backed memory.  The
compiler inserts instructions that release the mappings.

The technical difficulty is that encrypted addresses vary for the same
plaintext address and distinct calculations for the same address
produce a different ciphertext variant.  A systematic coding discipline
is followed that overcomes that.  The downside of the compiler solution
is that array and pointer access are linear or log time in the array
size, not constant time, but the upside is that all modes of access,
e.g., via pointer or displacement in array, are as compatible as the
programmer expects them to be, to any structural depth.

Further, the compiler has to randomly vary the code it generates as much
as possible in order to provide security guarantees.  A single principle
for the compiler to follow has been enunciated -- any arithmetic
instruction that writes must be varied by the compiler to the maximal
extent possible.  It has been shown that then the compiler is `best
possible' in terms of introducing maximal possible entropy across
recompilations to the data beneath the encryption in a runtime trace,
and that swamps biases introduced by human programmers or other
agencies.  The theory quantifies exactly an existing `cryptographic
semantic security relative to the security of the encryption' result for
encrypted computing, and implies that the adversarial operator or
operating system cannot guess what user data is beneath the encryption,
to any degree better than chance.  That is perhaps so even in the
absence of encryption, thanks to an `obfuscation scheme' the compiler
modifies the user's data with, which an adversary has no basis for
distinguishing from the user's intention.

\renewcommand{\baselinestretch}{0.93}
\bibliographystyle{IEEEtran}
\begin{scriptsize}
\bibliography{IEEEabrv,ches-2019}

% Generated by IEEEtran.bst, version: 1.14 (2015/08/26)
\begin{thebibliography}{10}
\providecommand{\url}[1]{#1}
\csname url@samestyle\endcsname
\providecommand{\newblock}{\relax}
\providecommand{\bibinfo}[2]{#2}
\providecommand{\BIBentrySTDinterwordspacing}{\spaceskip=0pt\relax}
\providecommand{\BIBentryALTinterwordstretchfactor}{4}
\providecommand{\BIBentryALTinterwordspacing}{\spaceskip=\fontdimen2\font plus
\BIBentryALTinterwordstretchfactor\fontdimen3\font minus
  \fontdimen4\font\relax}
\providecommand{\BIBforeignlanguage}[2]{{%
\expandafter\ifx\csname l@#1\endcsname\relax
\typeout{** WARNING: IEEEtran.bst: No hyphenation pattern has been}%
\typeout{** loaded for the language `#1'. Using the pattern for}%
\typeout{** the default language instead.}%
\else
\language=\csname l@#1\endcsname
\fi
#2}}
\providecommand{\BIBdecl}{\relax}
\BIBdecl

\bibitem{fletcher2012}
C.~W. Fletcher, M.~van Dijk, and S.~Devadas, ``A secure processor architecture
  for encrypted computation on untrusted programs,'' in \emph{Proc. 7th ACM
  Work.\ Scal.\ Trust.\ Comp. ({STC}'12)}.\hskip 1em plus 0.5em minus
  0.4em\relax New York: ACM, 2012, pp. 3--8.

\bibitem{BB13a}
P.~Breuer and J.~Bowen, ``A fully homomorphic crypto-processor design:
  Correctness of a secret computer,'' in \emph{Proc.\ Int.\ Symp.\ Eng.\ Sec.\
  Softw. Sys. ({ESSoS}'13)}, ser. LNCS, no. 7781.\hskip 1em plus 0.5em minus
  0.4em\relax Heidelberg/Berlin: Springer, Feb. 2013, pp. 123--138.

\bibitem{oic}
N.~G. Tsoutsos and M.~Maniatakos, ``Investigating the application of one
  instruction set computing for encrypted data computation,'' in \emph{Proc.\
  Int.\ Conf.\ Sec., Priv. Appl.\ Crypto.\ Eng.}\hskip 1em plus 0.5em minus
  0.4em\relax Springer, 2013, pp. 21--37.

\bibitem{heroic}
------, ``The {HEROIC} framework: Encrypted computation without shared keys,''
  \emph{{IEEE} Trans. {CAD} {IC} Sys.}, vol.~34, no.~6, pp. 875--888, 2015.

\bibitem{BB16b}
P.~Breuer, J.~Bowen, E.~Palomar, and Z.~Liu, ``A practical encrypted
  microprocessor,'' in \emph{Proc.\ 13th Int.\ Conf.\ Sec.\ Crypto.\
  ({SECRYPT}'16)}, C.~Callegari \emph{et~al.}, Eds., vol.~4.\hskip 1em plus
  0.5em minus 0.4em\relax {SCITEPRESS}, Jul. 2016, pp. 239--250.

\bibitem{BB18a}
------, ``The secret processor will go to the ball: Benchmark insider-proof
  encrypted computing,'' in \emph{Proc.\ 3rd Euro.\ Symp.\ Sec.\ Priv.\ \&
  Affil.\ Works.\ ({EuroS\&PW}'18)}.\hskip 1em plus 0.5em minus 0.4em\relax CA,
  USA: {IEEE} Comp.\ Soc., Apr. 2018, pp. 145--152.

\bibitem{BB17a}
------, ``On ob\-fuscating compilation for encrypted computing,'' in
  \emph{Proc.\ 14th Int.\ Conf.\ Sec.\ Crypto.\ ({SECRYPT}'17)}, P.~Samarati,
  M.~S. Obaidat, and E.~Cabello, Eds., INSTICC.\hskip 1em plus 0.5em minus
  0.4em\relax Port.: {SCITEPRESS}, Jul. 2017, pp. 247--254.

\bibitem{ansi99}
{ISO/IEC}, ``Programming languages -- {C},'' International Organization for
  Standardization, 9899:201x Tech.\ Report n1570, Aug. 2011, {JTC 1, SC 22, WG
  14}.

\bibitem{BB18c}
P.~Breuer, J.~Bowen, E.~Palomar, and Z.~Liu, ``On security in encrypted
  computing,'' in \emph{Proc. 20th Int. Conf. Info. Comm. Sec. ({ICICS}'18)},
  ser. LNCS, D.~Naccache \emph{et~al.}, Eds., no. 11149.\hskip 1em plus 0.5em
  minus 0.4em\relax Cham, Ger.: Springer, Oct. 2018, pp. 192--211.

\bibitem{halderman2009lest}
J.~A. Halderman, S.~D. Schoen, N.~Heninger, W.~Clarkson, W.~Paul, J.~A.
  Calandrino, A.~J. Feldman, J.~Appelbaum, and E.~W. Felten, ``Lest we
  remember: cold-boot attacks on encryption keys,'' \emph{Commun. {ACM}},
  vol.~52, no.~5, pp. 91--98, 2009.

\bibitem{Gruhn2013}
M.~Gruhn and T.~M\"uller, ``On the practicability of cold boot attacks,'' in
  \emph{Proc. 8th Int.\ Conf.\ Availability, Rel. and Sec. ({ARES}'13)}.\hskip
  1em plus 0.5em minus 0.4em\relax {IEEE}, Sep. 2013, pp. 390--397.

\bibitem{SmartCard}
O.~K\"ommerling and M.~G. Kuhn, ``Design principles for tamper-resistant
  smartcard processors,'' in \emph{Proc. {USENIX} Work.\ Smartcard Tech.}, May
  1999, pp. 9--20.

\bibitem{buer2006cmos}
M.~Buer, ``{CMOS}-based stateless hardware security module,'' Apr. 2006,
  {US}\,Pat.\ App.\ 11/159,669.

\bibitem{hartman1993system}
R.~Hartman, ``System for seamless processing of encrypted and non-encrypted
  data and instructions,'' Jun.~29 1993, {US} Patent 5,224,166.

\bibitem{hashimoto2001}
M.~Hashimoto, K.~Teramoto, T.~Saito, K.~Shirakawa, and K.~Fujimoto, ``Tamper
  resistant microprocessor,'' 2001, {US} Patent 2001/0018736.

\bibitem{Anati2013}
I.~Anati, S.~Gueron, S.~Johnson, and V.~Scarlata, ``Innovative technology for
  {CPU}-based attestation and sealing,'' in \emph{2nd Int. Work. Hard. Arch.
  Sup. Sec. Priv. ({HASP}'13)}, 2013.

\bibitem{hoekstra2013using}
M.~Hoekstra, R.~Lal, P.~Pappachan, V.~Phegade, and J.~Del~Cuvillo, ``Using
  innovative instructions to create trustworthy software solutions.''
  \emph{HASP@ ISCA}, vol.~11, 2013.

\bibitem{mckeen2013innovative}
F.~McKeen, I.~Alexandrovich, A.~Berenzon, C.~V. Rozas, H.~Shafi, V.~Shanbhogue,
  and U.~R. Savagaonkar, ``Innovative instructions and software model for
  isolated execution.'' \emph{HASP@ ISCA}, vol.~10, no.~1, 2013.

\bibitem{Zhang2012}
Y.~Zhang, A.~Juels, M.~K. Reiter, and T.~Ristenpart, ``Cross-{VM} side channels
  and their use to extract private keys,'' in \emph{Proc. {ACM} Conf.\ Comp.
  and Commun. Sec. ({CCS}'12)}.\hskip 1em plus 0.5em minus 0.4em\relax New
  York: ACM, 2012, pp. 305--316.

\bibitem{kissell2006method}
K.~Kissell, ``Method and apparatus for disassociating power consumed within a
  processing system with instructions it is executing,'' Mar.~9 2006, {US}
  Patent App. 11/257,381.

\bibitem{Hicks2005}
\BIBentryALTinterwordspacing
M.~Hicks and S.~Nettles, ``Dynamic software updating,'' \emph{ACM Trans.
  Program. Lang. Syst.}, vol.~27, no.~6, pp. 1049--1096, Nov. 2005. [Online].
  Available: \url{http://doi.acm.org/10.1145/1108970.1108971}
\BIBentrySTDinterwordspacing

\bibitem{Goldwasser1982}
S.~Goldwasser and S.~Micali, ``Probabilistic encryption \& how to play mental
  poker keeping secret all partial information,'' in \emph{Proc.\ 14th Ann.\
  {ACM} Symp.\ Th.\ Comp.}, ser. {STOC}'82.\hskip 1em plus 0.5em minus
  0.4em\relax ACM, 1982, pp. 365--377.

\bibitem{Rass2016}
S.~Rass and P.~Schartner, ``On the security of a universal crypto\-computer:
  The chosen instruction attack,'' \emph{{IEEE} {A}ccess}, vol.~4, pp.
  7874--82, 2016.

\bibitem{katz1996}
J.~Katz, A.~J. Menezes, P.~C. Van~Oorschot, and S.~A. Vanstone, \emph{Handbook
  of applied cryptography}.\hskip 1em plus 0.5em minus 0.4em\relax CRC press,
  1996, chapter 10, section 2.2.

\bibitem{DGHV10}
M.~van Dijk, C.~Gentry, S.~Halevi, and V.~Vaikuntanathan, ``Fully homomorphic
  encryption over the integers,'' in \emph{Proc.\,29th\,Ann.\,Int.\,Conf.\,Th.\
  Appl.\,Crypto.\,Tech.\,({EUROCRYPT}'10)}.\hskip 1em plus 0.5em minus
  0.4em\relax Springer, 2010, pp. 24--43.

\bibitem{RAD78}
R.~L. Rivest, L.~Adleman, and M.~L. Dertouzos, ``On data banks and privacy
  homomorphisms,'' \emph{Foundations of Secure Computation, Academia Press},
  pp. 169--179, 1978.

\bibitem{Gentry09}
C.~Gentry, ``Fully homomorphic encryption using ideal lattices,'' in
  \emph{Proc. 41st Ann. {ACM} Symp.\ Th.\ Comp. ({STOC}'09)}, NY, 2009, pp.
  169--178.

\bibitem{BB18b}
P.~Breuer, J.~Bowen, E.~Palomar, and Z.~Liu, ``Superscalar encrypted {RISC}:
  The measure of a secret computer,'' in \emph{Proc.\ 17th Int.\ Conf.\ Trust,
  Sec.\ \& Priv.\ in Comp.\ \& Comms.\ (TrustCom'18)}.\hskip 1em plus 0.5em
  minus 0.4em\relax CA, USA: {IEEE} Comp.\ Soc., Aug. 2018.

\bibitem{DR2002}
J.~Daemen and V.~Rijmen, \emph{The Design of {R}ijndael: {AES} -- The Advanced
  Encryption Standard}.\hskip 1em plus 0.5em minus 0.4em\relax Springer, 2002.

\bibitem{Weicker84}
R.~Weicker, ``Dhrystone: A synthetic systems programming benchmark,''
  \emph{Commun. {ACM}}, vol.~27, no.~10, pp. 1013--1030, Oct. 1984.

\bibitem{Pail99}
P.~Paillier, ``Public-key cryptosystems based on composite degree residuosity
  clas\-ses,'' in \emph{Proc.\ Int.\ Conf.\ Th.\ Appl.\ Crypto.\ Tech.\
  ({EUROCRYPT}'99)}, ser. LNCS, J.~Stern, Ed., no. 1592.\hskip 1em plus 0.5em
  minus 0.4em\relax Heidelberg/Berlin: Springer, 1999, pp. 223--238.

\bibitem{cryptoblaze18}
F.~Irena, D.~Murphy, and S.~Parameswaran, ``Cryptoblaze: A partially
  homomorphic processor with multiple instructions and non-deterministic
  encryption support,'' in \emph{Proc. 23rd Asia S. Pac. Des. Autom. Conf.
  ({ASP-DAC})}.\hskip 1em plus 0.5em minus 0.4em\relax IEEE, 2018, pp.
  702--708.

\bibitem{ostrovsky1990}
R.~Ostrovsky, ``Efficient computation on oblivious {RAM}s,'' in \emph{Proc.
  22nd Ann. {ACM} Symp.\ Th.\ Comp.}, ACM.\hskip 1em plus 0.5em minus
  0.4em\relax ACM, 1990, pp. 514--523.

\bibitem{conway87fractran}
J.~H. Conway, ``Fractran: A simple universal programming language for
  arithmetic,'' in \emph{Open Problems in Commun. and Comp.}, T.~M. Cover and
  B.~Gopinath, Eds.\hskip 1em plus 0.5em minus 0.4em\relax Springer, 1987, pp.
  4--26.

\bibitem{Biryukov2011}
A.~Biryukov, ``Known plaintext attack,'' in \emph{Encyclopedia of Cryptography
  and Security}, H.~C.~A. van Tilborg and S.~Jajodia, Eds.\hskip 1em plus 0.5em
  minus 0.4em\relax Boston, MA: Springer, 2011, pp. 704--705.

\bibitem{Lipp2018meltdown}
M.~Lipp, M.~Schwarz, D.~Gruss, T.~Prescher, W.~Haas, S.~Mangard, P.~Kocher,
  D.~Genkin, Y.~Yarom, and M.~Hamburg, ``Meltdown,'' \emph{ArXiv e-prints},
  Jan. 2018.

\bibitem{Kocher2018spectre}
P.~Kocher, D.~Genkin, D.~Gruss, W.~Haas, M.~Hamburg, M.~Lipp, S.~Mangard,
  T.~Prescher, M.~Schwarz, and Y.~Yarom, ``Spectre attacks: Exploiting
  speculative execution,'' \emph{ArXiv e-prints}, Jan. 2018.

\bibitem{Sundblad71}
Y.~Sundblad, ``The {Ackermann} function. a theoretical, computational, and
  formula manipulative study,'' \emph{{BIT} Num. Math.}, vol.~11, no.~1, pp.
  107--119, Mar. 1971.

\bibitem{yao86}
A.~C.-C. Yao, ``How to generate and exchange secrets,'' in \emph{27th Ann.\
  Symp.\ Found.\ Comp.\ Sci.}\hskip 1em plus 0.5em minus 0.4em\relax IEEE,
  1986, pp. 162--167.

\bibitem{Barr98}
M.~Barr, ``Memory,'' in \emph{Programming Embedded Systems in C and C++},
  1st~ed., A.~Oram, Ed.\hskip 1em plus 0.5em minus 0.4em\relax Sebastopol, CA:
  O'Reilly \& Associates, Inc., 1998, ch.~6, pp. 64--92.

\bibitem{cody1981analysis}
W.~J. Cody, ``Analysis of proposals for the floating-point standard,''
  \emph{Computer}, no.~3, pp. 63--68, 1981.

\bibitem{Goldberg1991}
D.~Goldberg, ``What every computer scientist should know about floating-point
  arithmetic,'' \emph{ACM Comput. Surv.}, vol.~23, no.~1, pp. 5--48, Mar. 1991.

\bibitem{cheon2016floating}
J.~H. Cheon, A.~Kim, M.~Kim, and Y.~Song, ``Homomorphic encryption for
  arithmetic of approximate numbers,'' Cryptology ePrint Archive, Report
  2016/421, 2016, \url{https://eprint.iacr.org/2016/421}.

\bibitem{Costache2016}
A.~Costache, N.~Smart, S.~Vivek, and A.~Waller, ``Fixed point arithmetic in she
  scheme,'' Cryptology ePrint Archive, Report 2016/250, 2016,
  \url{https://eprint.iacr.org/2016/250}.

\bibitem{Arita2017}
S.~Arita and S.~Nakasato, ``Fully homomorphic encryption for point numbers,''
  in \emph{Proc.\ Int.\ Conf.\ Info.\ Sec.\ Crypto.\ ({InSCrypt}'16)}, ser.
  LNCS, K.~Chen, D.~Lin, and M.~Yung, Eds., vol. 10143.\hskip 1em plus 0.5em
  minus 0.4em\relax Cham, Ger.: Springer, 2017.

\end{thebibliography}
\end{scriptsize}
\renewcommand{\baselinestretch}{1}

\appendix
\def\appendixname{Appendix -- not for publication}
\small

\section*{Appendix}

This appendix contains proofs from work that cannot be referred to
without breaking the anonymity rules, and/or that it seems better to
avoid interrupting the text with.  Accordingly, the referees should
perhaps regard it as additional material that may be incorporated into a
final text or referenced from there, as they may require.

\setcounter{theorem}{0}
\setcounter{remark}{0}
\setcounter{definition}{0}
\setcounter{objective}{0}

\def\theproposition{\thesection\arabic{theorem}}
\def\thetheorem{\thesection\arabic{theorem}}
\def\thelemma{\thesection\arabic{theorem}}
\def\thecorollary{\thesection\arabic{theorem}}
\def\theremark{\thesection\arabic{remark}}
\def\thedefinition{\thesection\arabic{definition}}
\def\theobjective{\thesection\arabic{objective}}

\section{Proofs}
\label{s:Proofs}

\begin{proposition}
\label{at:1}
No method of observation exists by which the operator (who does not 
possess the key) may decrypt program output from 
the `fixed' HEROIC instruction set of  Remark~\ref{r:1}.
\end{proposition}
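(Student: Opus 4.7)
The plan is to reduce the proposition to Theorem~\ref{e:star} by verifying that the `fixed' HEROIC instruction set of Remark~\ref{r:1} conforms to the four machine code axioms (\ref{p:2a})--(\ref{p:2d}) of Box~\ref{b:2}. Axiom (\ref{p:2a}) atomicity is immediate: the compound $x \leftarrow x\mathop{[-]}y\mathop{[+]}K$ followed by a conditional branch on $x\mathop{[\le]}A$ is a single machine atom without externally visible intermediate state. Axiom (\ref{p:2b}) encrypted working is inherited from HEROIC's Paillier-based design. Axiom (\ref{p:2d}) non-collision can be enforced by padding encrypted constants $K$ and $A$ into a subspace of the cipherspace disjoint from runtime data and from constants of other opcodes, as for the full KPU instruction set.

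The crux is axiom (\ref{p:2c}) malleability, and the plan is to exhibit explicit choices of $K$ and $A$ that realise an arbitrary obfuscation scheme. Let $\Delta_x$, $\Delta_y$ be the deltas from nominal beneath the encryption that the compiler has planned for $x$, $y$ on entry, and let $\Delta'_x$ be the target delta for $x$ on exit. The arithmetic step produces $\E[(x{+}\Delta_x)\mathop{-}(y{+}\Delta_y)\mathop{+}K]$, so setting $K = \Delta'_x - \Delta_x + \Delta_y$ yields $\E[(x{-}y)\mathop{+}\Delta'_x]$ as required. For the test, choosing $A = \Delta'_x$ makes the comparator $(x{-}y)\mathop{+}\Delta'_x \mathop{\le} \Delta'_x$, which is exactly the nominal $(x{-}y)\mathop{\le}0$. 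Since $K$ and $A$ can be selected independently and encode every obfuscation scheme the chaotic compiler might emit, the instruction admits the full variability demanded by (\ref{p:2c}). Observe too that the instruction set remains computationally complete, as the underlying HEROIC instruction is \cite{conway87fractran}, so a chaotic compiler implementing the tactic \eqref{e:maxH} is available.

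With all four axioms discharged, Theorem~\ref{e:star} applies and says the plaintext at any single point in the trace, including the register(s) holding program output, can be consistently re-interpreted as any value in the plaintext range across recompilations, with those re-interpretations made equiprobable by the chaotic compiler. Hence any putative observation-based decryption by the operator would succeed at no better than the prior rate for guessing the output, which is not decryption, establishing the proposition by contradiction. The main obstacle I anticipate is what `program output' means if it consists of more than one word: Theorem~\ref{e:star} alone covers only one point, so for multi-word outputs one must upgrade to Theorem~\ref{t:1} and check that the output words are pairwise independent in the sense of Defn.~\ref{d:8} (i.e.\ they are not linked by copies or by last-write-before-join constraints). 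For distinct program variables this is immediate; for structured outputs one reasons case by case that whatever residual dependencies exist leave the adversary's posterior no sharper than the prior.
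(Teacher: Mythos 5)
Your proposal is correct in substance but takes a genuinely different route from the paper. The paper's own proof is elementary and constructive: it exhibits a single concrete alternative, a code $C^*$ identical to $C$ except that every branch constant $K$ is replaced by $K^*=K+7$, so that every datum beneath the encryption runs through the trace shifted by $+7$ while code and trace remain indistinguishable to the operator modulo encrypted values; any observation method must therefore return the same answer on $(C,T)$ and $(C^*,T^*)$ although the correct plaintexts differ by 7, a contradiction. You instead go top-down: verify axioms (\ref{p:2a})--(\ref{p:2d}) for the fixed instruction set (your explicit choices $K=\Delta'_x-\Delta_x+\Delta_y$ and $A=\Delta'_x$ are exactly the content of Remark~\ref{r:1} made concrete, and are a worthwhile addition) and then invoke Theorem~\ref{e:star}. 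Two cautions about that route. First, Proposition~\ref{at:1} is the seed of the appendix's chain (its $+7$ pattern is reused verbatim in Theorem~\ref{at:3}), and Theorem~\ref{e:star} is presented as a product of that chain; your argument escapes circularity only because Theorem~\ref{e:star} in fact rests on the independent structural induction of Lemma~\ref{at:4}, which you should say explicitly. Second, your appeal to equiprobability ``made so by the chaotic compiler'' imports a hypothesis the proposition does not state: $C$ is an arbitrary program in the fixed instruction set, not necessarily chaotically compiled. The paper's argument needs only the \emph{existence} of one consistent alternative interpretation to defeat any observation method, which is weaker and is all the malleability axiom (\ref{p:2c}) by itself provides; equiprobability is needed only for the later semantic-security statements. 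Your closing remarks on multi-word output via Theorem~\ref{t:1} and Defn.~\ref{d:8} address a point the paper's sketch does not raise, and both proofs share the same modular-wraparound caveat in asserting that branches decide identically after a uniform additive shift.
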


\begin{proof}
(Sketch) Suppose program $C$ is written using only the instructions
addition of a constant $y{\leftarrow}x[+]k^\E$ and branches based on comparison
with a constant $x[<]K^\E$ (a `fixed' HEROIC set), which satisfy
(\ref{p:2a}-\ref{p:2d}) of Box~\ref{b:2}.  The hypothetical method takes
as inputs the trace $T$ and code $C$.  But a modified code $C^*$ is
constructed below such that (i) it has a trace $T^*$ that `looks the
same' as $T$ to the operator, modulo encrypted data, and (ii) the new
code $C^*$ `lo\-oks the same' as $C$, modulo embedded encrypted
constants, so the operator's method must give the same result applied to
$C^*$ and $T^*$ as it does applied to $C$ and $T$, which is $y^\E$, say.
But the code $C^*$ gives the output $\E[y{+}7]$ when run, not $y^\E$.
So the method does not work.

The program $C^*$ 
differs from $C$ only in the encrypted constants $K^\E$ in 
the branch instructions.  Otherwise it is the same as $C$.
The constants $K$  need changing to match $x$ taking a
value that is 7 more than before beneath the encryption.  Changing $K$ to
$K^*=K{+}7$ achieves that. Branches jump (or not) as
they did before the increase of the plaintext data everywhere by 7. The
addition instructions are consistent as they 
are with the plaintext increase in both
input and output.  So the code does
the same at runtime as $C$ does, but on data that is everywhere
$x[+]7^\E$ instead of $x$, as required for the contradiction.
\end{proof}

\begin{corollary}
There is no method by which the privileged
operator can alter program $C$ using just add and compare
with constant instructions to get output $y^\E$ for known $y$.
\label{t:2}
\end{corollary}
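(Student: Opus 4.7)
The plan is to prove the corollary by contradiction, leveraging Proposition~\ref{at:1}. Suppose, toward contradiction, that a method $A$ exists which, given any program $C$ written in the fixed HEROIC instruction set of Remark~\ref{r:1} (add-constant and compare-constant only) together with a known target plaintext $y$, emits an altered program $C_y$ in the same instruction set whose output on the processor is guaranteed to be $y^\E$. The goal is to turn $A$ into a decryption oracle, which the Proposition forbids.

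The key observation is that $A$ doubles as an encryption oracle for the operator. Fix any convenient base program $C$; invoking $A$ on $C$ with an operator-chosen plaintext $y$ returns a program $C_y$ which, when executed on the processor, produces the ciphertext $y^\E$ of the nominated plaintext. So the operator/processor pair can, on demand, manufacture a ciphertext of any plaintext the operator names.

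Given any target ciphertext $z^\E$ the operator wishes to decrypt, they now enumerate candidate plaintexts $w$ across the (finite) plaintext space, synthesise $w^\E$ via $A$ as above, and test ciphertext equality $w^\E = z^\E$. Because the fixed HEROIC encryption is one-to-one (this is precisely why HEROIC's branch-sign lookup table is finite in size), exactly one candidate $w$ matches, and it equals the plaintext of $z^\E$. That constitutes a method by which the operator decrypts program output, directly contradicting Proposition~\ref{at:1}; hence no method $A$ of the assumed form can exist. The main obstacle is to confirm that the enumeration step is admissible under the Proposition's notion of ``method of observation''. The proof of Proposition~\ref{at:1} places no explicit complexity bound on the attacker, so the brute-force enumeration is legal and the argument closes; if one wished to strengthen the result to rule out $A$ using only a polynomial amount of work, I would replace the enumeration with a differential step that invokes $A$ on the same $C$ with two targets $y_1, y_2$, subtracts the resulting output ciphertexts on the processor to manufacture $(y_2-y_1)^\E$ for freely chosen plaintext, and extracts the plaintext of $z^\E$ from one suitably designed probe rather than by exhaustive matching.
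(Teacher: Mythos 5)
Your reduction runs the wrong way round, and the step that is supposed to close it fails in this setting. You turn the hypothesised method $A$ into an \emph{encryption} oracle and then try to upgrade that to a \emph{decryption} oracle by enumerating candidate plaintexts $w$, manufacturing $w^\E$, and testing ``ciphertext equality'' $w^\E = z^\E$. But in this framework the encryption is one-to-many: axiom (\ref{p:2d}) is enforced by padding beneath the encryption, and the Notation section is explicit that aliases of the same plaintext are non-identical bit strings ($\zeta_0 \mathop{\ne}\limits_{\rm id} \zeta_1$ even when $\D[\zeta_0]=\D[\zeta_1]$). So bitwise identity of $w^\E$ and $z^\E$ tells the operator nothing, and they have no way to test the semantic equivalence $\D[w^\E]=\D[z^\E]$ either: the fixed HEROIC set only compares a register against an \emph{embedded instruction constant}, and (\ref{p:2d}) together with (\ref{p:2d*}) forbids transplanting the runtime ciphertext $z^\E$ into a constant slot. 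Your fallback ``differential'' step needs register--register subtraction of two runtime outputs, which the fixed HEROIC set (add-a-constant and compare-with-a-constant only) does not provide. Your appeal to HEROIC's encryption being one-to-one is also the wrong premise: Remark~\ref{r:1} fixes the \emph{instruction set}, while the surrounding security argument requires (\ref{p:2d}), i.e., padding, i.e., a one-to-many encryption.

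The paper's own proof is a short direct contradiction that needs none of this machinery: if the operator could build $C^*=f(C)$ returning $y^\E$ with $y$ known, then $C^*$ is itself a program in the fixed instruction set whose output the operator can decrypt --- they already know it is $y$ --- which is exactly what Proposition~\ref{at:1} forbids. The one extra observation, which you omit, is that the encrypted constants of $C^*$ must be scavenged verbatim from $C$, since the operator has no key to mint new ones and (\ref{p:2d}) prevents combining existing ones arithmetically in the processor; that is what keeps $C^*$ inside the class of programs to which the Proposition applies.
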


\begin{proof}
Suppose for contradiction that the operator builds program
$C^*{=}f(C)$ that returns $y^\E$.  Then its constants $k^\E$ and
$K^\E$ are found in $C$, because $f$ has no way of
arithmetically combining them (the no collisions condition (\ref{p:2d}) means
they cannot be combined arithmetically in the processor and the operator
does not have the encryption key).  Proposition~\ref{at:1}
says the operator cannot read $y$ from the output of $C^*$, yet knows what
it is. Done by contradiction.
\end{proof}

\begin{theorem}
There is no method by which the privileged operator can read plaintext
runtime data
from a program $C$ built from instructions satisfying
(\ref{p:2a}-\ref{p:2d}), nor
deliberately alter it to give an intended
output $y^\E$ with $y$ known.
\label{at:3}
\end{theorem}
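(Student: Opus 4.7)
The plan is to lift the two-step argument of Proposition~\ref{at:1} and Corollary~\ref{t:2} from the specific `fixed' HEROIC instruction set to any instruction set satisfying (\ref{p:2a}--\ref{p:2d}). First I would establish the unreadability half: given any program $C$ built from such instructions, producing some trace $T$, and given any nontrivial delta scheme $\Delta$ on its registers and memory cells, one can build a companion program $C^*$ that realizes the same run but with every plaintext value beneath the encryption shifted by $\Delta$. The construction proceeds instruction by instruction using axiom (\ref{p:2c}): for each opcode, replace the embedded encrypted constants by alternatives that compensate for the incoming deltas on the operands and realize the prescribed outgoing delta on the result. Axiom (\ref{p:2a}) guarantees no intermediate state is exposed to probing, so the substitution leaks nothing through side channels; (\ref{p:2b}) keeps inputs and outputs encrypted; and (\ref{p:2d}) rules out any arithmetic combination in-processor of instruction constants with runtime ciphertexts, keeping the substitution clean.

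Next, exactly as in the proof sketch of Proposition~\ref{at:1}, I would argue that this construction makes any reading method impossible. Suppose for contradiction a method $M$ outputs the plaintext of some register at some point in the trace from $(C, T)$; then $M$ must give the same answer on $(C^*, T^*)$, because from the operator's perspective (no key) the two are indistinguishable: opcodes, register indices, control flow and trace length coincide, and only the encrypted constants and encrypted trace values differ, and these are opaque without the key. Yet the true plaintexts differ by $\Delta$, a contradiction. The deliberate-alteration half then follows as in Corollary~\ref{t:2}: if the operator constructs $C^* = f(C)$ whose output is $y^\E$ with $y$ known, the encrypted constants in $C^*$ must ultimately originate from $C$ (the operator has no key and (\ref{p:2d}) forbids mixing instruction constants with runtime data arithmetically in the processor), but the first half has just shown the plaintext output of any such program is unreadable to the operator, contradicting their purported knowledge of $y$.

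The step I expect to be the main obstacle is making the indistinguishability claim in the first half fully rigorous. One has to rule out that $M$ could exploit subtle correlations between the particular ciphertexts appearing in $T$ and those embedded in the constants of $C$; this is where the security of the encryption itself must be invoked, in the form of an IND-CPA-style assumption that no efficient distinguisher exists between the two collections of related ciphertexts presented by $(C, T)$ and $(C^*, T^*)$. Axiom (\ref{p:2a}) does the rest, closing off any side channel that would otherwise expose intermediate state. The per-instruction verification that (\ref{p:2c}) really permits any input/output delta is routine for the arithmetic forms in Table~\ref{tb:1} (the three-constant \textbf{mul} and \textbf{div} forms are the witnesses), but has to be confirmed uniformly across the whole ISA; this is the one point where the argument touches the concrete shape of the instruction set rather than just the four abstract axioms.
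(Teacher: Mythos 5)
Your proposal follows essentially the same route as the paper's own proof: construct a companion program $C^*$ by adjusting the embedded encrypted constants per (\ref{p:2c}) so that all runtime data beneath the encryption is shifted by a delta (the paper uses a uniform $+7$ where you allow a general scheme $\Delta$), argue that $(C,T)$ and $(C^*,T^*)$ are indistinguishable to the keyless operator so any reading method yields a contradiction, and then derive the no-alteration half exactly as in Corollary~\ref{t:2}. Your explicit flagging of the IND-CPA-style assumption needed for the indistinguishability step is a fair observation about where the paper's sketch leans on the hypothesis that the encryption itself is secure, but it does not change the structure of the argument.
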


\begin{proof}
(Sketch)
A modified code $C^*$ is constructed that
looks the same modulo encrypted constants, and has
runtime trace $T^*$ that looks the same as the original
$T$ modulo encrypted data.
The argument goes as for 
Proposition~\ref{at:1} and Corollary~\ref{t:2}.

In program $C$, every arithmetic instruction of the form 
\[
r_0{\leftarrow}(r_1[-]k_1^\E) \mathop{[\Theta]}
(r_2\mathop{[-]}k_2^\E)\mathop{[+]}k_0^\E
\]
for operator $\Theta$
can be chang\-ed for $C^*$ via adjustments in its embedded
constants to accommodate every data value passing
through registers and memory to be +7 more beneath the encryption
than it used to be, as in the proof of
Theorem~\ref{at:1} and Corollary~\ref{t:2}. The change is from 
$k_i$ to $k_i^*{=}k_i+7$, $i=0,1,2$.

A branch instruction in $C$ with test
$(r_1\mathop{[-]}k_1^\E)\mathop{[R]}(r_2\mathop{[-]}k_2^\E)$ for
relation $R$, the instruction is changed for $C^*$ to
$(r_1\mathop{[-]}\E[k_1^*])\mathop{[R]}(r_2\mathop{[-]}\E[k_2^*])$
with $k_i^*=k_i+7$, $i=1,2$, and
the branch goes the same way at runtime in trace $T^*$ for
$C^*$ as it did originally in trace $T$ for $C$.
Unconditional jumps are unaltered.

The outcome is a trace $T^*$ that is the same as $T$
modulo the encrypted data values, which by hypothesis
cannot be read by the adversary (they differ by 7 from the originals,
beneath the encryption).
Code $C^*$ looks the same too, apart from the embedded (encrypted)
constants, which also cannot be read by the adversary. As in
the earlier proof, a method $f(C,T)$ for decryption must give the same result
as $f(C^*,T^*)$, yet the answers (the decrypted data) are different by 7 in the
two cases, so method $f$ cannot exist.
\end{proof}

\begin{lemma}
There is a compile strategy for machine code instruction
sets satisfying (\ref{p:2a}-\ref{p:2d}) such that
the probability across different compilations
that any particular 32-bit value $x$ has its encryption $x^\E$ in
a given register or memory location at any given point in the program
at runtime is uniformly $1{/}2^{32}$\kern-3pt.
\label{at:4}
\end{lemma}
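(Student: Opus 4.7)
The plan is to exhibit a concrete compile strategy and show that at any fixed program point $\pi$ and fixed location $l$, the plaintext beneath the ciphertext held in $l$ at $\pi$ is, across recompilations, a uniformly distributed random variable on the 32-bit range. The strategy is the one already motivated in the text: the compiler maintains an obfuscation scheme $D$ (Defn.~\ref{d:3}) assigning a delta $\Delta l$ to each location at each control-graph point, and for every machine-code instruction it emits it chooses the \emph{post-instruction} delta of the target location uniformly at random in $\{0,1,\dots,2^{32}{-}1\}$, then uses the embedded encrypted constants guaranteed by property~(\ref{p:2c}) of Box~\ref{b:2} to make the instruction realize that delta. For inputs, the compiler declares the input delta uniformly at random as part of the published scheme; this is the only place randomness enters externally.

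First I would argue by structural/inductive analysis on the trace that, under this strategy, the marginal distribution of the delta $\Delta l$ at location $l$ at point $\pi$ is uniform on 32 bits. The base case is either an input (delta chosen uniformly by fiat) or the first write to $l$: by~(\ref{p:2c}) varying the instruction's encrypted constant over all $2^{32}$ aliases realises every possible output delta exactly once, so picking the constant uniformly makes $\Delta l$ uniform. For the inductive step, any later write to $l$ uses the same mechanism: the new delta is freely re-chosen uniformly and realised by one of the encrypted constants, regardless of what inputs with whatever distributions arrive in the source registers. This already suffices for the lemma, because the ciphertext in $l$ at $\pi$ decrypts to the nominal value $v_0$ plus $\Delta l$ modulo $2^{32}$; by Fact~\ref{f:2}, a uniform offset applied to any (possibly deterministic, possibly data-dependent) $v_0$ gives a uniform distribution, so every target plaintext $x$ is hit with probability exactly $1/2^{32}$.

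The step I expect to be the main obstacle is control-flow joins — ends of loops, merges after conditionals, returns, \textbf{goto} targets — where constraint (\ref{p:5b}) forces the deltas coming in along several paths to agree, so the compiler cannot keep choosing independently. The way around it for \emph{this} lemma is that the statement is about a single location at a single point, i.e., a marginal, not a joint distribution. I would designate, at each join for each location, one incoming path as ``free'' (whose trailer writes a freshly and uniformly chosen delta) and treat the other incoming paths as ``forced'' (their trailers are emitted to copy this same delta, which is realisable by~(\ref{p:2c}) since the trailer instruction has its own embedded constant). Since the free delta is uniform and all forced paths synchronise to it, the post-join delta is uniform too, and the inductive invariant is preserved across joins. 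Reads and pure copies propagate the invariant unchanged because they do not alter the plaintext beneath the encryption at the target after accounting for the delta carried by the scheme, and~(\ref{p:2a}) guarantees no intermediate states leak a non-uniform value.

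Finally I would close by noting the two small caveats. First, property~(\ref{p:2d}$^*$) is used only to ensure that the random constants chosen in different instructions are genuinely independent choices from the adversary's viewpoint (no forced coincidences). Second, the argument only controls the one-point marginal; correlations between points are exactly the subject of Theorem~\ref{t:1} and are not required here. The conclusion is that under this strategy, for any program point $\pi$, any location $l$ and any target plaintext $x$, $\Pr[\,\text{contents of }l\text{ at }\pi\text{ decrypt to }x\,] = 1/2^{32}$, as claimed.
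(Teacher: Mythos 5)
Your proposal is correct and matches the paper's own argument in all essentials: a structural induction on the machine code in which each writing instruction's embedded constant (property (\ref{p:2c})) realises a freshly, uniformly chosen delta, the convolution of a uniform offset with any nominal value yielding a uniform marginal (your Fact~\ref{f:2} is exactly the paper's explicit sum over $\Upsilon$), and control-flow joins handled by letting the first-compiled path choose freely while the other paths' trailers are forced to agree (the paper even inserts an `add zero' instruction when a path does not otherwise write the location, just as your ``forced'' trailers do). No substantive difference in route or in what is proved.
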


\begin{proof}
Consider the arithmetic instruction $I$ in the program.
Suppose that by modifying the embedded constants
in the other instructions in the program it is already possible for
all other locations $l$ other than that written by $I$ and at all other
points in the program to vary the
value $x_{l} = x{+}\Delta x$, where $x_{l}^\E$ is stored in $l$,
randomly and
uniformly across compilations, taking advantage of the properties of
the instruction set as the compiler described in the text does.
Let $I$ write value $y^\E$ in location $l$.  By design,
$I$ has a parameter $k^\E$
that may be tweaked to offset $y$ from the
nominal result $f(x+\Delta x)$ by any chosen amount $\Delta y$.
The compiler chooses $k$ with a
distribution such that $\Delta y$
is uniformly distributed across the possible range. The
instructions in the program that receive $y^\E$ from $I$ may be
adjusted to compensate for the
$\Delta y$ change by changes in their controlling parameters.
Then $p(y{=}Y){=}p(f(x{+}\Delta x){+}\Delta y{=}Y)$ and the
latter probability is $p(y{=}Y){=}\sum\limits_{\Upsilon}
p(f(x{+}\Delta x){=}\Upsilon \land \Delta y{=}Y{-}\Upsilon)$.
The probabilities are independent (because  $\Delta y$ is newly
introduced just now), so that sum is
$p(y{=}Y){=}\sum\limits_{\Upsilon} p(f(x{+}\Delta x){=}\Upsilon) p(\Delta
y{=}Y{-}\Upsilon)$.
That is $p(y{=}Y){=}\frac{1}{2^{32}}\sum\limits_{\Upsilon} p(f(x{+}{\rm
d}x){=}\Upsilon) $.
Since the sum is over all possible $\Upsilon$, the total of the summed
probabilities is 1, and $p(y{=}Y){=}1/2^{32}$. The distribution of
$\E[x_{l}]=\E[x{+}\Delta x]$ in other locations $l$ is unchanged.
At a point where two control paths join the choice of $\Delta y$ is
not free, but instead must coincide in the second path to be compiled
with the choice already made by the compiler in the first path to be compiled,
which was, however, free. If the first path does not write $l$ at
all then let an `add zero' instruction be inserted in it.
Done by structural induction on the machine code program.
\end{proof}

\noindent
That compile strategy proves Theorem~\ref{e:star} of the text, and also:

\begin{theorem}[{\bf\ref{e:ddagger}} of the main text]
Runtime user data beneath the encryption
is {\em semantically secure} against the operator
for programs compiled by the chaotic compiler of Lemma~\ref{at:4}.
\label{t:5}
\end{theorem}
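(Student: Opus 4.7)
The plan is to deduce the theorem directly from Lemma~\ref{at:4}, which establishes that at any single point in the runtime trace of a program produced by the chaotic compiler of that lemma, the plaintext beneath the encryption in any chosen register or memory location is uniformly distributed across recompilations, each of the $2^{32}$ possible values arising with probability exactly $1/2^{32}$. First I would recast game G1 of Section~\ref{s:Key} in the one-word formulation that matches Theorem~\ref{e:ddagger}: the operator selects two candidate plaintexts and a point in the trace, and tries to decide which candidate underlies the ciphertext observed there with probability better than $1/2$.

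Next I would exhibit a reduction. Given a polynomial-time operator $\mathcal{A}$ distinguishing in G1 with advantage $\varepsilon$ above chance, I would construct an IND-CPA adversary $\mathcal{B}$ against the underlying encryption $\E$ that inherits advantage $\varepsilon$. $\mathcal{B}$ internally runs the chaotic compilation of Lemma~\ref{at:4}, with one change: at the write instruction that determines the plaintext at $\mathcal{A}$'s point of interest, $\mathcal{B}$ delegates the choice of ciphertext to its own IND-CPA challenger by submitting the two compiler-offset plaintext candidates and embedding the returned challenge ciphertext in the simulated trace. By Lemma~\ref{at:4} every other ciphertext in $\mathcal{A}$'s view is uniformly distributed and therefore statistically identical to a genuine chaotic compilation, so the simulated transcript is indistinguishable from the real one, and $\mathcal{A}$'s guess lifts directly to $\mathcal{B}$'s IND-CPA answer.

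The hard part will be handling the operator's permitted interference and experimentation in step G1.4 rather than merely passive observation. I would address this through axioms~(\ref{p:2a})--(\ref{p:2d*}) of Box~\ref{b:2}: atomicity rules out side-channel information leaks from partial execution, while the strengthened no-collisions condition (\ref{p:2d*}) forbids the operator from splicing encrypted instruction constants into runtime arithmetic or vice versa to manufacture fresh ciphertexts of known plaintext. Consequently any experiment the operator mounts is a deterministic function of ciphertexts whose plaintexts are individually uniformly distributed per Lemma~\ref{at:4}; the reduction $\mathcal{B}$ can perform every such function internally without additional oracle queries, so no extra advantage is available beyond what the ciphertext challenge itself confers. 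The theorem then closes by contradiction: since the encryption is hypothesized to be IND-CPA secure, no such $\mathcal{B}$ can win with advantage $\varepsilon$, so no such $\mathcal{A}$ can exist, and the operator's winning probability in G1 is exactly $1/2$, which is relative semantic security for a single word as required.
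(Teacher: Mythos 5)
Your overall strategy---a reduction to IND-CPA in which $\mathcal{B}$ simulates the compilation and plants the challenge ciphertext at the point of interest---is a genuinely different route from the paper's. The paper never builds a reduction: it computes the guesser's success probability directly, using Lemma~\ref{at:4} only to supply the uniform prior $\mathrm{p}(b{=}1)=\mathrm{p}(b{=}0)=\tfrac{1}{2}$ on a bit of the target word, plus an explicit independence hypothesis (the method $f(C,T)$ ``has nothing to cause it to give different answers but incidental features of encrypted numbers and its internal spins of a coin''), and collapses the conditionals to obtain exactly $\tfrac{1}{2}$. Your version is the more standard modern argument, but as written it has a concrete gap in the simulation step.

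The gap is where you invoke Lemma~\ref{at:4} to claim that ``every other ciphertext in $\mathcal{A}$'s view is uniformly distributed and therefore statistically identical to a genuine chaotic compilation.'' Lemma~\ref{at:4} is a statement about the \emph{marginal} distribution of the plaintext at a \emph{single} location and point; it says nothing about the \emph{joint} distribution of the rest of the view with the challenge word, which is what $\mathcal{B}$ must reproduce without knowing the challenge bit. Every later trace value arithmetically derived from the challenge word is, for a fixed compilation, a function of the challenge plaintext, and the exceptions to joint independence are exactly the dependent pairs of Defn.~\ref{d:8} and Theorem~\ref{t:1}: a copy must be the \emph{same} ciphertext (which $\mathcal{B}$ can echo, by \eqref{p:5a}), but to argue that a fresh independent delta intervenes everywhere else, so that $\mathcal{B}$ may sample downstream ciphertexts independently of the bit, you need Theorem~\ref{t:1} or the inductive structure inside the proof of Lemma~\ref{at:4}, not its single-point conclusion. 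Worse, step G1.4 lets the operator run its own instructions on the challenge ciphertext: the outcome of a branch testing $r_1\mathop{[=]}r_2\mathop{[+]}k^{\E}$ with the challenge word in $r_1$ can genuinely differ between the two candidate plaintexts, and $\mathcal{B}$ cannot answer it without knowing the bit---your appeal to atomicity and (\ref{p:2d}) does not dispose of this, since a ``deterministic function of ciphertexts'' can still depend on the hidden bit. The paper sidesteps both issues by modelling the adversary as a function of the code and the passively observed trace and asserting independence outright; to make your reduction sound you must either restrict to passive observation in the same way, or separately bound the probability of such distinguishing events.
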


\begin{proof}
Consider a probabilistic method $f$ that guesses for a particular 
runtime value beneath the encryption
`the top bit $b$ is 1, not 0', with probability $p_{C,T}$ for
program $C$ with trace $T$.
The probability that $f$ is right is
\[
\text{p}((b_{C,T}{=}1 \text{ and } f(C,T){=}1)
\text{ or }
(b_{C,T}{=}0 \text{ and } f(C,T){=}0))
\]
Splitting the conjunctions, that is
\begin{align*}
 &\text{p}(b_{C,T}{=}1) \text{p}(f(C,T){=}1\mathop|b_{C,T}{=}1)\\
+~&\text{p}(b_{C,T}{=}0) \text{p}(f(C,T){=}0\mathop|b_{C,T}{=}0)
\end{align*}
But the method $f$ cannot distinguish the compilations
it is looking at as the codes and traces are the same,
modulo the (encrypted) values in them, which the adversary cannot read.
The method $f$ applied to $C$ and $T$ has nothing to cause
it to give different answers but incidental features of encrypted
numbers and its internal spins of a coin.
Those are {\em independent} of if the bit $b$ is 1 or
0 beneath the encryption, supposing the encryption is effective. So 
\begin{align*}
\text{p}(f(C,T)=1\mathop|b_{C,T}=1) &= \text{p}(f(C,T)=1) = p_{C,T} \\
\text{p}(f(C,T)=0\mathop|b_{C,T}=0) &= \text{p}(f(C,T)=0) = 1 - p_{C,T}
\end{align*}
By Lemma~\ref{at:4}, 1 and 0 are equally likely
across all possible compilations $C$, so
the probability $f$ is right reduces to
\[
\tfrac{1}{2} ~ p_{C,T}+\tfrac{1}{2} ~ (1-p_{C,T}) = \tfrac{1}{2}
\]
since $\text{p}(b_{C,T}{=}1) = \text{p}(b_{C,T}{=}0) = \frac{1}{2}$ .
\end{proof}

The information theory in the text is based on the idea that
instructions are varied by the compiler by changing the (encrypted)
constants embedded in them, which additively varies the difference from
`nominal' of the result of the instruction through the full possible
(32-bit) range (\ref{p:2b}).  The viewpoint is that of an observer who
can see the plaintext values beneath the encryption, because an
encrypted word depends one-to-one on the plaintext when
padding is taken into account.

\def\O{\mathcal{O}}

\begin{lemma}[Lemma~\ref{l:3} of text]
The entropy of a trace is that from the instructions that
appear for a first time in it.
\end{lemma}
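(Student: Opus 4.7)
The plan is to decompose the trace $T$ as an ordered sequence of write events $w_1, w_2, \dots$, where each $w_i$ records the encrypted value deposited in a register or memory cell at step $i$ (treating the initial read of an un-initialized location as a ``virtual write'' that installs the input). Using the chain rule for entropy,
\begin{equation*}
\H(T) \;=\; \sum_{i} \H\!\bigl(w_i \,\bigm|\, w_1,\dots,w_{i-1}\bigr),
\end{equation*}
so it suffices to identify the conditional contribution of each write in turn. I will then classify the events and show that only first-occurrence writes carry any conditional entropy.

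The classification I would use is as follows. First, suppose $w_i$ is produced by an arithmetic instruction $I$ that appears in the trace for the first time and is not a trailer whose set-mate has already appeared. By \eqref{e:maxH}, the compiler has chosen the embedded encrypted constant $k^\E$ in $I$ uniformly over its $2^{32}$-sized range, independently of everything fixed so far. The register content written by $I$ is of the form $f(\text{inputs})\mathop{+}\Delta$ where $\Delta$ depends bijectively on $k$; conditional on $w_1,\dots,w_{i-1}$ the inputs and $f$ are fixed, so by Facts~\ref{f:1} and~\ref{f:2} the conditional distribution of $w_i$ is flat on its range, contributing exactly $32$ bits. Second, if $w_i$ comes from re-executing an instruction already seen, then the opcode and all embedded constants are the same (they are compile-time data), and the register/memory inputs it consumes are fixed by the conditioning on $w_1,\dots,w_{i-1}$. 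The output is therefore deterministic and contributes $0$ bits. Third, if $w_i$ is produced by a trailer instruction whose set-mate occurred earlier, the delta $\Delta l$ at the control-path join is forced (by property \eqref{p:5b}) to equal the delta already committed by the earlier path, so again the output is determined by the past and contributes $0$ bits. Fourth, a copy instruction contributes $0$ bits by \eqref{p:5a}. Finally, an input word (a virtual write for a first read) is the user's plaintext plus the compiler's freely planned offset, which varies uniformly, giving $32$ bits.

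Summing these contributions recovers $32(n+m)$ in the stronger statement of the main text and, in the simpler form stated here, confirms that the trace entropy is wholly accounted for by the first appearances of writing instructions (and inputs). The forward implication from the per-event analysis to the chain-rule sum is then purely bookkeeping.

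I expect the main obstacle to be handling the trailer case carefully: a trailer instruction really is the \emph{first} occurrence of its opcode-and-position in the trace, yet its contribution must be $0$ because its embedded constant is forced by the compiler to align with a previously committed delta on a sibling path. The cleanest way to make this rigorous, I think, is to re-express the chain-rule sum so that it ranges over \emph{equivalence classes} of instructions (one class per arithmetic writer, trailer sets collapsed to a single representative), and argue that the representative of each class carries the full $32$ bits while every other member is a deterministic function of the representative and earlier writes via the synchronization constraint. The secondary subtlety is ensuring that re-executions of a loop body truly reuse the same inputs modulo the conditioning — this reduces to the fact that addressing via the scheme of Section~\ref{ss:VA} is consistent across iterations and that hardware aliasing is blocked, so the functional determinism of the instruction set transfers to the observed trace.
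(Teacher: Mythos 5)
Your proposal is correct and follows essentially the same route as the paper's proof: both isolate the compiler's freshly chosen delta ($D'$ in the paper's notation) as the sole source of conditional entropy per write, observe that a repeated execution of an instruction reuses the same delta and so contributes nothing new, and treat first reads of unwritten locations as input events carrying the planned offset. Your chain-rule framing and explicit handling of trailer sets and copies merely make rigorous what the paper's appendix proof leaves informal, and your equivalence-class treatment of trailers matches the ``counted once only per set'' clause in the main-text statement of Lemma~\ref{l:3}.
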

\begin{proof}
The delta $\Delta s$ from the nominal state $s$ in the
experienced state $s{+}\Delta s$
is what contributes to entropy, because the nominal values $s$
themselves are determined.  The delta is introduced by an
instruction $f$ that nominally has semantics $s_0
\mathop{\mapsto}\limits^f s_1$ but has been varied by the compiler to
semantics $f'$ such that $s_0{+}\Delta s_0
\mathop{\mapsto}\limits^{f'} s_1{+}\Delta s_1$ with $s_1 = f(s_0)$.  The
compiler arranges the perturbation $\Delta f$ in the constants of the
instruction so that
\begin{align*}
\Delta s_1 &= D' = f'(s_0)-f(s_0)
\end{align*}
where $D'$ is the obfuscation scheme 
at the point in the program just after the instruction, and $\Delta s_0 = D$
is that just before it. Both $D$, $D'$ are
independent of $s_0$, $s_1$.  The state change experienced
\[(s_1+\Delta s_1) - (s_0+\Delta s_0)
\]
is the part of the trace due to the instruction and, substituting, it is
\[
f(s_0)-s_0+ D' -\Delta s_0
\]
The $\Delta s_0$ is produced by previous instructions and
$f(s_0)-s_0$ is the `nominal' trace from an unperturbed
instruction semantics $f$, leaving $D'$ as the source of the
entropy contribution by this instruction.
At a second appearance in the trace of the instruction the variation
$D'$ is the same, and could have been predicted, so
contributes no entropy. 

For an instruction that reads input $x+\Delta x$ from memory
location $l$ for the first time, to register $r$, say, the data is
offset by $\Delta x = D l$ from the nominal value $x$, where $D$ is the
compiler's obfuscation scheme at the point in the program just
before the instruction runs, and the same argument applies.
\end{proof}

\section{Floating Point}

\def\stardot{\mathop{*}\limits^.}
\def\plusdot{\mathop{\text{\small+}}\limits^.}

\noindent
A practical instruction set for encrypted computing must also manipulate
floating point numbers and this is how it works.  Let single precision
floating point instructions {\bf addf}, {\bf subf}, {\bf mulf} etc.\ be
denoted by a trailing {\bf f}.  They work on registers containing
(encrypted) 32-bit integers that encode single precision floating point
numbers (`float') per the IEEE\,754 standard
\cite{cody1981analysis,Goldberg1991}.

 %(21 mantissa bits, 10 exponent bits, 1 sign bit).

\begin{definition}
Let $\stardot$, $\plusdot$ etc.\ denote the floating point operations
on plaintext integers encoding IEEE~754 floats.
\end{definition}
\noindent
Let
$[\,\stardot\,]$, $[\,\plusdot\,]$ etc.\  be the corresponding
operations in the ciphertext domain, in the convention noted at end
of Section~\ref{s:Not}.
The multiplication ${\bf
mulf}\,r_0\,r_1\,r_2\,k_1^\E\,k_2^\E\,k_0^\E$ has semantics
conforming to (\ref{p:2a}-\ref{p:2d}) as follows:
\begin{equation}
\tag{$\stardot$}
\label{e:dagger*f}
r_0 \leftarrow (r_1 \mathop{[-]} k^\E_1) \,\mathop{[\,\stardot\,]}\, (r_2 \mathop{[-]} k^\E_2) \mathop{[+]} k^\E_0
\end{equation}
The $-$ and $+$ are the ordinary integer subtraction and addition
operations respectively, and $[-]$ and $[+]$ are the corresponding
operations in the ciphertext domain. 
The insight is that the semantics fits the theory, and it is as `easy' for
this hardware to implement the complex IEEE operation as any other,
notwithstanding contemporary efforts to simplify and adapt floating point
(for homomorphic encryption settings), c.f.\
\cite{cheon2016floating,Costache2016,Arita2017}.  It would also not be
impractical to do in software (IEEE floating point emulation compiled
encrypted performs well in tests).

\def\eqdot{\mathop{=}\limits^.}

The branch-if-equal instruction ${\bf beqf}\,r_1\,r_2\,k_1^\E\,k_2^\E$ tests
\begin{equation}
\tag{$\eqdot$}
\label{e:dagger=f}
(r_1\mathop{[-]}k_1^\E) \,\mathop{[\eqdot]}\, (r_2\mathop{[-]} k_2^\E)
\end{equation}
where $\eqdot$  is the floating point comparison on 
floats via IEEE\,754, and $[\eqdot]$ is the corresponding
test in the ciphertext domain, with
$x^\E\mathop{[\eqdot]}y^\E \iff x\eqdot y$.  The subtraction
is as integers on the encoding, not floating point. The instruction
is atomic, per (\ref{p:2a}), with embedded encrypted constants $k_i^\E$,
$i=1,2$.

Representation of double precision 64-bit floats (`double') as
64-bit integers is also specified by IEEE\,754.  Let instructions
that manipulate those (encrypted) be denoted by a {\bf d} suffix.
Registers are referenced in pairs in the instruction by naming the first of
the pair only, its successor in the processor's register indexing scheme
being understood as the second of the pair.  The first of a pair
contains an encrypted integer representing the 32 high bits of the
IEEE\,754 encoding of a double float, the second contains an encrypted
integer encoding 32 low bits.

\begin{definition}
Let $\ddot{\hbox{\small+}}$, $\ddot{*}$, etc.\ denote
double precision floating point addition, multiplication, etc.\ on
IEEE\,754 encodings of doubles as 64-bit (i.e., $2\times32$-bit) integers.
\end{definition}
\noindent
Let $[\ddot{*}]$ be the corresponding multiplication operation in
the cipherspace domain on two pairs of encrypted 32-bit integers. Then
the multiplication instruction on encrypted double precision floats
has the following semantics, remembering registers
are referenced in pairs for double length operations, and using the
pairwise integer add and subtract operators of Definition~\ref{d:5}:
\begin{align}
 r_0 \leftarrow (r_1 \mathop{[-^2]}k_1^\E)
                    \,\mathop{[\,\ddot{*}\,]}\,
                  (r_2\mathop{[-^2]}k_2^\E) \mathop{[+^2]}
                    k_0^\E
\tag{$\ddot{*}$}
\end{align}
That satisfies (\ref{p:2a}-\ref{p:2d}).
It takes encrypted 64-bit double precision float operands in
the (pair) registers $r_1$, $r_2$  and writes to the
(pair) register $r_0$. The $k_i^\E$, $i=0,1,2$ are encrypted 64-bit
constants each embedded  as two concatenated encrypted 32-bit
constants in the instruction, which is written
$
{\bf muld}\,r_0\,r_1\,r_2\,k_0^\E\,k_1^\E\,k_2^\E
$.

\end{document}